\documentclass[11pt]{article}
\topmargin=-0.5cm \textwidth 16cm \textheight 23cm
\oddsidemargin=-0.5cm
\usepackage[english]{babel}
\usepackage{amsmath,amsthm}
\usepackage{amsfonts}
\usepackage[comma,numbers,square,sort&compress]{natbib}
\usepackage{graphicx,subfigure,amsmath,amssymb,mathrsfs,amsfonts,amstext,amsthm}
\usepackage{latexsym, amssymb}
\usepackage{graphicx}
\usepackage{subfigure}
\usepackage{pgf,tikz}
\usepackage{pstricks}
\newtheorem{thm}{Theorem}[section]

\newtheorem{lem}[thm]{Lemma}

\theoremstyle{definition}
\newtheorem{defn}[thm]{Definition}
\theoremstyle{remark}

\numberwithin{equation}{section}
\begin{document}

\title{\bfseries\textrm{Noise-based control of opinion dynamics*}
\footnotetext{*W. Su and X. Chen are with School of Automation and Electrical Engineering, University of Science and Technology Beijing \& Key Laboratory of Knowledge Automation for Industrial Processes, Ministry of Education, Beijing 100083, China, {\tt suwei@amss.ac.cn, cxz@ustb.edu.cn}. Yongguang Yu is with School of Science, Beijing Jiaotong University, Beijing 100044, China, {\tt ygyu@bjtu.edu.cn}. G. Chen is with National Center for Mathematics and Interdisciplinary Sciences \& Key Laboratory of Systems and
Control, Academy of Mathematics and Systems Science, Chinese Academy of Sciences, Beijing 100190,
China, {\tt chenge@amss.ac.cn}.}}

\author{Wei Su \and Xianzhong Chen \and Yongguang Yu \and Ge Chen }
%
\date{}%
\maketitle
\begin{abstract}
Designing feasible control strategies for opinion dynamics in complex social systems has never been an easy task. It requires a control protocol which 1) is not enforced on all individuals in the society, and 2) does not exclusively rely on specific opinion values shared by the social system. Thanks to the recent studies on noise-induced consensus in opinion dynamics, the noise-based intervention strategy has emerged as the only one meeting both of the above requirements, yet its underlying general theory is still lacking. In this paper, we perform rigorous theoretical analysis and simulations of a noise-based control strategy for opinion formation in which only a fraction of individuals is affected by randomly generated noise. We found that irrespective of the number of noise-driven individuals, including the case of only one single noise-affected individual, the system can attain a quasi-consensus in finite time, and the critical noise strength can be obtained. Our results highlight the efficiency of noise-driven mechanisms for the control of complex social dynamics.
\end{abstract}

\textbf{Keywords}: Social control, noise-based intervention, Hegselmann-Krause model, opinion dynamics
\section{Introduction}
In the past decade, research on opinion dynamics and consensus problems in complex networked systems has drawn an increasing attention from a variety of fields, including mathematics, physics, social science, information theory, and various interdisciplinary areas \cite{Castellano2009,Proskurnikov2017,Olfati2004}.
In the study of complex social dynamics, developing control strategies for opinion consensus has been a central issue \cite{Friedkin2015, Liu2014,Albi2015,Albi2014,Wongkaev2015}. However, designing a practical and effective intervention strategy for a social opinion system that should attain global agreement has always been a grand challenge, largely due to two main obstacles. First, the traditional control theory has a practical shortcoming as it requires some kind of ``precise'' information about the system states, given that it is nearly impossible to acquire the accurate information about opinions of most individuals in a large social system. Secondly, due to the large system size, one is hardly capable of exerting control over every individual
in the society. Although the pinning control mechanism can usually handle such large complex networks by controlling only a fraction of agents, it still needs a precise information about the underlying system states \cite{Wang2002,Chen2007,Yu2009,Wang2014}. Considering these hindering issues, developing a new intervention strategy which can \emph{circumvent the reliance on system states} and instead \emph{act upon only a fraction of individuals} is indicated.

In recent years, a few noise-based control strategies for coordination enhancement and consensus generation in social networks have been advanced \cite{Shirado2017, Su2016arX, Chi2011}. For example, Shirado and Christakis \cite{Shirado2017} devised a network coordination experiment which took a local rule into consideration. After adding some noisy agents into the network, they observed a remarkable improvement of the coordination efficiency of the group.
Su {\it et al.} \cite{Su2016arX} designed a simple noise injection scheme to eliminate the disagreement in a divisive opinion system. By injecting random noise to only one agent of a divisive opinion system, generated by the local-rule based Hegselmann-Krause (HK) dynamics, it was shown that the cleavage of the system can be eliminated and the opinions can become synchronized.

These noise-based control strategies generally meet the requirements for a social system control; however, the underlying general theory for noise-induced opinion control is still lacking. For example, one limitation of the strategy of Su and colleagues \cite{Su2016arX} is that only one agent was allowed to receive noise, and moreover, the initial opinion state of the system had to be assumed {\it a priori} while the effective noise strength relied on the group size and was vanishingly small as the group size was increasing. The noise scheme developed in \cite{Su2016arX} was inspired by the previous seminal works on the noisy opinion dynamics \cite{Hadzibeganovic2008,Mas2010,Grauwin2012,Carro2013,Pineda2013,Lichtenegger2016,Hadzibeganovic2009,chazelle2017}, and was later theoretically verified for the HK opinion dynamics model \cite{Su2016}. However, the noise component in most of these previous models was typically added to all agents, which is far from realistic for a control of large-scale social systems. Critically, the mathematical approach developed in \cite{Su2016} was inadequate for modeling the scenario of noise application to only a fraction of agents.

Driven by these limitations of earlier studies, we herein aim to establish a general theoretical framework for the noise-based control strategy of large social networks in which no prior assumptions about the initial opinion states are required and where any percentage of individuals in the system can be influenced by noise. Currently, two major types of models for opinion dynamics can be distinguished. One is the class of topology-dependent models which follow the line of research initiated by DeGroot \cite{DeGroot1974}, and the other is the so-called bounded confidence model which originates from the works of Deffuant \emph{et al.} \cite{Deffuant2000} and Hegselmann and Krause \cite{Hegselmann2002}. The topology-dependent model is typically efficient in illustrating the opinion evolution of relatively small groups. For large social systems, however, the confidence-based model is much more convenient because it captures local-level self-organization processes that are the hallmark of large complex systems.

In this paper, we employ the confidence-based HK dynamics \cite{Hegselmann2002,Lorenz2007} to study the noise-based control theory of a large social opinion system. We rigorously prove that given any initial system states and any percentage of agents affected by noise, the system will almost surely attain a quasi-consensus in finite time. Crucially, our analysis yields the critical noise strength for quasi-consensus. More specifically, when only one agent is affected by noise, the critical noise strength is equal to the confidence threshold. If more than two agents are noise-driven, the critical noise strength is half the confidence threshold. These results provide a solid ground for the noise-based control strategy of social dynamics in large-scale systems.

The rest of the present paper is organized as follows: In section \ref{formulation}, we first present some necessary preliminaries. Our main findings are then presented and discussed in section \ref{Results}; in section \ref{Simulations}, we present our numerical simulation results to verify the main theoretical analyses and finally, some concluding remarks and future research directions are given in section \ref{Conclusions}.

\section{Definitions and model}\label{formulation}
Denote $\mathcal{V}=\{1,2,\ldots,n\}$ as the set of $n$ agents, $\mathcal{S}\subset\mathcal{V}$ the nonempty set of controlled agents. Let $x_i(t)\in[0,1], i\in\mathcal{V}, t\geq 0$ be the state of agent $i$ at time $t$ and $\xi_i(t), i\in\mathcal{S}, t>0$ be the noise control. The update rule for the noise-induced HK dynamics takes the form:
\begin{equation}\label{basicHKmodel}
  x_i(t+1)=\left\{
           \begin{array}{ll}
             1,  & \hbox{~$x_i^*(t)>1~$} \\
             x_i^*(t),  & \hbox{~$x_i^*(t)\in[0, 1]~$} \\
             0,  & \hbox{~$x_i^*(t)<0~$}
           \end{array},~\forall i\in\mathcal{V}, t\geq 0,
         \right.
\end{equation}
where
\begin{equation}\label{xit}
  x_i^*(t)=|\mathcal{N}(i, x(t))|^{-1}\sum\limits_{j\in \mathcal{N}(i, x(t))}x_j(t)+I_{i\in\mathcal{S}}\xi_i(t+1)
\end{equation}
and
\begin{equation}\label{neigh}
 \mathcal{N}(i, x(t))=\{j\in\mathcal{V}\; \big|\; |x_j(t)-x_i(t)|\leq \epsilon\}
\end{equation}
is the neighboring set of $i$ at $t$ with $\epsilon>0$ representing the confidence threshold of agents. Here, $I_{i\in\mathcal{S}}$ is the indicator function which takes the value 1 or 0 according to $i\in\mathcal{S}$ or not, and $|\cdot|$ stands for the cardinal number of a set or the absolute value of a real number.

To proceed further, we need to introduce some preliminary definitions as follows.
Let $\mathcal{G}_\mathcal{V}(t)=\{\mathcal{V},\mathcal{E}(t)\}$ be the graph of $\mathcal{V}$ at time $t$, and $(i,j)\in\mathcal{E}(t)$ if and only if $|x_i(t)-x_j(t)|\leq \epsilon$. A graph $\mathcal{G}_\mathcal{V}(t)$ is called a {\it complete graph} if and only if $(i,j)\in\mathcal{E}(t)$ for any $i\neq j$; and $\mathcal{G}_\mathcal{V}(t)$ is called a {\it connected graph} if and only if for any $i\neq j$, there is a sequence of edges $(i,i_1),(i_1,i_2),\ldots,(i_k,j)$ in $\mathcal{E}(t)$.

The definition of a \emph{quasi-consensus} of the noisy model (\ref{basicHKmodel})-(\ref{neigh}) is then as in \cite{Su2016}:
\begin{defn}\label{robconsen}
Denote
\begin{equation*}\label{opindist}
  d_{\mathcal{V}}(t)=\max\limits_{i, j\in \mathcal{V}}|x_i(t)-x_j(t)|~~\mbox{and}~~\overline{d}_{\mathcal{V}}=\limsup\limits_{t\rightarrow \infty}d_{\mathcal{V}}(t).
\end{equation*}
(i) if $\overline{d}_{\mathcal{V}} \leq \epsilon$, we say the system (\ref{basicHKmodel})-(\ref{neigh}) will reach a quasi-consensus.\\
(ii) if $P\{\overline{d}_{\mathcal{V}} \leq \epsilon\}=1$, we say almost surely (a.s.) the system (\ref{basicHKmodel})-(\ref{neigh}) will attain a quasi-consensus.\\
(iii) if  $P\{\overline{d}_{\mathcal{V}} \leq \epsilon\}=0$, we say a.s. the system (\ref{basicHKmodel})-(\ref{neigh}) cannot reach quasi-consensus.\\
(iv) let $T=\min\{t: d_{\mathcal{V}}(t')\leq \epsilon \mbox{ for all } t'\geq t\}$.
 If $P\{T<\infty\}=1$, we say a.s. the system (\ref{basicHKmodel})-(\ref{neigh}) attains a quasi-consensus in finite time.
\end{defn}

\section{Main Results}\label{Results}
For simplicity, we first present the result of a quasi-consensus for independent and identically distributed (i.i.d.) noises, and we then generalize these results with independent noises by a sufficient and a necessary condition.
\begin{thm}\label{consthm0}
Suppose the noises $\{\xi_i(t)\}_{i\in\mathcal{V},t\geq 1}$ are i.i.d. random variables with $E\xi_1(1)=0, 0<E\xi_1^2(1)<\infty$.
Let $x(0)\in [0,1]^n$ and $\epsilon\in(0,1]$ be arbitrarily given, then\\
 (i) if $P\{|\xi_1(1)|\leq \epsilon\}=1$ when $|\mathcal{S}|=1$ or $P\{|\xi_1(1)|\leq \epsilon/2\}=1$ when $|\mathcal{S}|>1$, then a.s. the system (\ref{basicHKmodel})-(\ref{neigh}) will attain a quasi-consensus in finite time; \\
 (ii) if $P\{\xi_1(1)>\epsilon\}>0$ and $P\{\xi_1(1)<-\epsilon\}>0$ when $|\mathcal{S}|=1$, or
    $P\{\xi_1(1)>\epsilon/2\}>0$ and $P\{\xi_1(1)<-\epsilon/2\}>0$ when $|\mathcal{S}|>1$,
then a.s. the system (\ref{basicHKmodel})-(\ref{neigh}) cannot reach a quasi-consensus.
\end{thm}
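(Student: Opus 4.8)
The plan is to organize both parts around a single absorbing configuration, the complete graph, and to study entry into it (for (i)) and escape from it (for (ii)). Throughout write $\bar x(t)=\tfrac1n\sum_{j}x_j(t)$ and let $\Pi(y)=\min\{1,\max\{0,y\}\}$ denote the projection onto $[0,1]$ used in \eqref{basicHKmodel}.

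For part (i) I would first establish an \emph{absorption lemma}: under the stated noise bound, once $\mathcal G_{\mathcal V}(t_0)$ is complete it remains complete for every $t\ge t_0$. When the graph is complete every neighbour set equals $\mathcal V$, so each uncontrolled agent is sent to $\bar x(t)$ and each controlled agent to $\Pi(\bar x(t)+\xi_i(t+1))$, which lies within its noise bound of $\bar x(t)$ because $\Pi$ is $1$-Lipschitz and $\bar x(t)\in[0,1]$. Checking the three pairwise distances — uncontrolled/uncontrolled $=0$, uncontrolled/controlled $\le$ bound, controlled/controlled $\le 2\times$ bound — shows each is $\le\epsilon$ exactly because the bound is $\epsilon$ when $|\mathcal S|=1$ (no controlled/controlled pair exists) and $\epsilon/2$ when $|\mathcal S|>1$. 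Hence $d_{\mathcal V}(t)\le\epsilon$ for all $t\ge t_0$, so reaching a complete graph in finite time gives $P\{T<\infty\}=1$.

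The substantive step in (i) is a \emph{reachability lemma}: there exist $N=N(n,\epsilon)$ and $p>0$, uniform over $x(t)\in[0,1]^n$, with $P\{\mathcal G_{\mathcal V}(t+N)\text{ complete}\mid x(t)\}\ge p$. Since $E\xi_1(1)=0$ and $E\xi_1^2(1)>0$, there is a fixed $\eta>0$ with $q:=\min\{P\{\xi_1(1)\ge\eta\},P\{\xi_1(1)\le-\eta\}\}>0$, so a controlled agent can be nudged in either direction with probability bounded below. I would build an explicit favourable noise event that fuses all clusters: when a controlled agent is isolated its update is the bounded random walk $\Pi(x+\xi)$, which reaches any subinterval in boundedly many steps; when it sits inside a cluster $C$ the cluster centroid shifts by $\xi/|C|$ per step, so feeding it same-sign kicks $\ge\eta$ drags $C$ monotonically until it meets a neighbouring cluster and the averaging step fuses them. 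Because opinions lie in $[0,1]$ the total drag distance is $\le1$ and there are at most $n-1$ fusions, so the number of steps is bounded by some $N(n,\epsilon)$ and the event has probability at least $p:=q^{N}>0$ uniformly. Partitioning time into disjoint length-$N$ windows, the Markov property makes each window independently succeed with probability $\ge p$, so a complete graph is reached in finitely many windows almost surely, which with the absorption lemma yields (i).

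For part (ii) I would show that $G:=\{\exists\,T<\infty:\ d_{\mathcal V}(t)\le\epsilon\ \forall t\ge T\}$ has probability $0$. On $G$ the graph is complete for $t\ge T$, the uncontrolled agents collapse to $b_t:=\bar x(t)$, and maintenance forces $|\Pi(b_t+\xi(t+1))-b_t|\le\epsilon$ at every step (single-agent case). A direct computation shows this is violated with probability $q_+:=P\{\xi_1(1)>\epsilon\}>0$ whenever $b_t<1-\epsilon$, and with probability $q_-:=P\{\xi_1(1)<-\epsilon\}>0$ whenever $b_t>\epsilon$; in the ``safe zone'' $b_t\in[1-\epsilon,\epsilon]$ (nonempty only for $\epsilon\ge\tfrac12$) one step cannot break consensus. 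The key observation is that consensus can nonetheless be broken in boundedly many steps from \emph{any} complete state: repeated same-sign kicks send the controlled agent to an endpoint $0$ or $1$ — still within $\epsilon$ of the bulk, hence keeping the graph complete — while moving $b_t$ monotonically out of the safe zone, after which a single opposite kick forces $d_{\mathcal V}>\epsilon$. This gives a uniform lower bound $\rho>0$ on the probability of breaking within $L=L(n,\epsilon)$ steps from every complete state, and a disjoint-window Borel--Cantelli argument makes breaking occur almost surely, contradicting $G$; thus $P\{\overline d_{\mathcal V}\le\epsilon\}=0$. The case $|\mathcal S|>1$ is identical with $\epsilon$ replaced by $\epsilon/2$ and with two controlled agents pushed in opposite directions so that their separation exceeds $\epsilon$.

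I expect the reachability lemma of part (i) to be the main obstacle. The absorption lemma and all of part (ii) are essentially bookkeeping with independence and the clipping map, but driving an \emph{arbitrary} configuration into a single $\epsilon$-ball using only the controlled fraction — possibly one agent — requires carefully choreographing the cluster fusions and verifying that, after each fusion, the enlarged cluster can be collapsed to within $\epsilon$. This is delicate precisely because the hypotheses permit a degenerate (e.g.\ two-point) noise law, so one cannot rely on arbitrarily small adjustments and must carry out the entire merging using only the values $\xi\ge\eta$ and $\xi\le-\eta$ that are guaranteed to carry positive probability.
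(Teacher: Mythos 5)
Your overall architecture matches the paper's: your absorption lemma is exactly the paper's Lemma \ref{robconspeci}, your disjoint-window geometric-trials argument is the proof of Theorem \ref{Suff_thm}, and your part (ii) is Theorem \ref{Nece_thm} (indeed your treatment there is \emph{more} careful than the paper's, whose one-step bound $q^2$ silently ignores the projection onto $[0,1]$: when $\bar x(t)$ is near $0$ or $1$, opposite kicks need not produce $d_{\mathcal V}>\epsilon$ after clipping, which is precisely what your safe-zone-escape handles). However, your part (i) has a genuine gap at the step you yourself flag: the reachability lemma. Your fusion narrative assumes that after two clusters meet, the merged group can be ``collapsed to within $\epsilon$'' and that the whole process takes at most $n-1$ fusions, but you supply no mechanism for collapsing a merged configuration of diameter up to $2\epsilon$ whose interaction graph is connected but \emph{not} complete when the only controlled agent sits far from the extreme agents: the uncontrolled extremes then move only by averaging, connectivity is not preserved under HK updates, and the group can re-fragment into components containing no controlled agent at all --- which also invalidates the ``$\le n-1$ fusions'' count, since fissions can recur and the bounded $N(n,\epsilon)$ and probability $q^N$ are left unsupported.

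This is exactly the point where the paper does its real work. Its protocols \eqref{noiseproto0}, \eqref{noiseproto1}, \eqref{noiseproto00} push each controlled agent toward whichever half of the opinion range its local average occupies, and the Appendix Lemma \ref{consenconnected} proves quantitatively --- via the threshold $d_0<\frac{\epsilon}{2n^2+2n+1}$ --- that as long as the graph stays connected, the minimum opinion rises by at least $d_0$ within every $n$ steps \emph{purely through averaging over the connected graph}, with no controlled agent required near the minimum; this yields the uniform time bound $\bar L_0=2n\frac{d_{\mathcal V}(0)-\epsilon}{\underline d}$. Disconnection events are then absorbed by the induction on $n$ in Lemma \ref{consposiprob}: components without controlled agents are handled by the noise-free convergence result (Lemma \ref{HKfrag}), and same-sign kicks keep already-consensual subgroups separated until the timing of the two subgroups is synchronized, after which Lemma \ref{clusterconsen} merges them. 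To complete your proposal you would need an analogue of the connectivity estimate of Lemma \ref{consenconnected} plus this fragmentation bookkeeping; without them the main lemma of part (i) remains unproved, so the proposal, while structurally on the right track, is incomplete precisely at the paper's central technical contribution.
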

Conclusion (i) shows that if noise strength is no more than $\epsilon$ when $|\mathcal{S}|=1$ or $\epsilon/2$ when $|\mathcal{S}|>1$ a.s., the system will a.s. achieve a quasi-consensus in finite time; Conclusion (ii) states that when noise strength has a positive probability to exceed $\epsilon$ when $|\mathcal{S}|=1$ or $\epsilon/2$ when $|\mathcal{S}|>1$, the system will not reach quasi-consensus. This implies $\epsilon$ when $|\mathcal{S}|=1$ and $\epsilon/2$ when $|\mathcal{S}|>1$ are the critical noise strengths to induce a quasi-consensus. Conclusions (i) and (ii) can be directly derived from the following Theorems \ref{Suff_thm} and \ref{Nece_thm}, which present a sufficient and a necessary condition for independent noises, respectively.

\begin{thm}\label{Suff_thm}
Suppose $\{\xi_i(t), i\in\mathcal{V}, t\geq 1\}$ are independent and satisfy:
 i) $P\{|\xi_i(t)|\leq \delta\}=1$ with $\delta\in (0,\epsilon]$ when $|\mathcal{S}|=1$, or $\delta\in(0, \epsilon/2]$ when $|\mathcal{S}|>1$;
 ii) there exist constants $a\in(0,\delta),p\in(0,1)$ such that
$P\{\xi_i(t)\geq a\}\geq p, P\{0\leq \xi_i(t)\leq a\}\geq p$ and $P\{\xi_i(t)\leq -a\}\geq p, P\{-a\leq \xi_i(t)\leq 0\}\geq p$.
Then, for any initial state $x(0)\in [0,1]^n$ and  $\epsilon\in(0,1]$, the system (\ref{basicHKmodel})-(\ref{neigh}) will a.s. attain a quasi-consensus in finite time and $d_\mathcal{V}\leq \delta$ a.s. when $|\mathcal{S}|=1$, or $d_\mathcal{V}\leq 2\delta$ a.s. when $|\mathcal{S}|>1$.
\end{thm}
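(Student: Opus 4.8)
The plan is to single out the absorbing ``complete-graph'' set and to show it is reached almost surely in finite time. Throughout write $m(t)=\min_i x_i(t)$ and $M(t)=\max_i x_i(t)$, so that $d_\mathcal{V}(t)=M(t)-m(t)$, and note that $\mathcal{G}_\mathcal{V}(t)$ is complete if and only if $d_\mathcal{V}(t)\le\epsilon$. First I would prove an \emph{absorption lemma}: if $d_\mathcal{V}(t)\le\epsilon$, then a.s.\ $d_\mathcal{V}(t+1)\le\delta$ when $|\mathcal{S}|=1$ and $d_\mathcal{V}(t+1)\le 2\delta$ when $|\mathcal{S}|>1$; in either case $d_\mathcal{V}(t+1)\le\epsilon$, so the property persists at all later times. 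The proof is a one-line computation: completeness forces $\mathcal{N}(i,x(t))=\mathcal{V}$ for every $i$, so by (\ref{xit}) one has $x_i^*(t)=\bar{x}(t)+I_{i\in\mathcal{S}}\xi_i(t+1)$ with $\bar{x}(t)=n^{-1}\sum_j x_j(t)$. The uncontrolled agents all land on $\bar{x}(t)$ while each controlled agent lands within $\delta$ of it, so the pre-clipping spread is at most $\delta$ (a single controlled value against the common $\bar{x}(t)$) or $2\delta$ (two controlled values); since the clipping map in (\ref{basicHKmodel}) is monotone and $1$-Lipschitz it cannot enlarge the spread. This is exactly where the thresholds split, $\delta\le\epsilon$ being needed for $|\mathcal{S}|=1$ and $2\delta\le\epsilon$ for $|\mathcal{S}|>1$, matching hypothesis i). The lemma already delivers the refined bounds $\overline{d}_{\mathcal V}\le\delta$ (resp.\ $2\delta$) and shows that the first hitting time $\tau$ of $\{d_\mathcal{V}\le\epsilon\}$ satisfies $T\le\tau$ in the sense of Definition \ref{robconsen}(iv).

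The core step is a \emph{reachability lemma}: there exist an integer $N$ and a constant $q>0$, depending only on $n,\epsilon,\delta,a,p$ and \emph{not} on the current state, such that from every $x(t)\in[0,1]^n$ one has $P\{d_\mathcal{V}(t+s)\le\epsilon\text{ for some }s\le N\mid x(t)\}\ge q$. Granting this, the conclusion is routine. Because the noises are independent, $x(t)$ is a time-homogeneous Markov chain; applying the reachability bound on the disjoint blocks of length $N$ and using the Markov property gives $P\{\tau>kN\}\le(1-q)^k\to 0$, so $\tau<\infty$ a.s. Combined with the absorption lemma this yields $T<\infty$ a.s.\ and the stated almost-sure bound on $\overline{d}_{\mathcal V}$.

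To prove reachability I would exhibit, from an arbitrary configuration, a \emph{favorable noise path} of bounded length that herds all opinions into a common window of width at most $\epsilon$. Two mechanisms drive this. First, a controlled agent with no neighbor but itself satisfies $x_c^*(t)=x_c(t)+\xi_c(t+1)$ and so performs a bounded random walk whose direction I may prescribe, since hypothesis ii) assigns probability at least $p$ to each of the four increments $\xi_c\ge a$, $\xi_c\in[0,a]$, $\xi_c\le -a$, $\xi_c\in[-a,0]$. Second, an agent moved into a gap between two confidence clusters acts as a mobile bridge: once it lies within $\epsilon$ of both sides, the HK averaging in (\ref{xit}) draws the clusters together, and clipping at $0$ or $1$ furnishes a stationary anchor. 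Taking step sizes at most $a<\delta\le\epsilon$ guarantees the walking agent cannot leap over an $\epsilon$-wide linking window, so each gap (of width at most $1$) is bridged in at most $\lceil 1/a\rceil$ prescribed steps; as there are at most $n$ clusters to merge, the step count $N$ and the number of prescribed increments are bounded in terms of $n$ and $a$ alone. Since the uncontrolled agents evolve deterministically given the state, the probability of the favorable path is at least $p$ raised to the (bounded) number of prescribed increments, a configuration-independent constant $q>0$.

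The main obstacle is this reachability lemma, and inside it the configuration-uniformity. I control only the signs of the increments and the bound $a$ on their size, not their exact values, so I must argue that the inexact pushes still land the mobile agent inside the linking window and that the ensuing deterministic contraction genuinely shrinks the range instead of letting the clusters re-separate. The monotonicity estimates $m(t+1)\ge m(t)$ and $M(t+1)\le M(t)$ for uncontrolled agents (each averaging values in $[m(t),M(t)]$), together with keeping the controlled agents inside $[m(t),M(t)]$ via small increments, prevent the range from growing along the path. The remaining difficulty is to bound $N$ uniformly over the continuum of starting configurations, which is where the discreteness of the number of clusters (at most $n$) and the uniform step size $a$ are essential.
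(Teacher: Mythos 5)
Your overall skeleton coincides with the paper's: your absorption lemma is exactly Lemma~\ref{robconspeci}, your state-uniform reachability lemma plays the role of Lemma~\ref{consposiprob}, and your geometric-trials conclusion is the same conditional-product argument used in the paper's proof of Theorem~\ref{Suff_thm}. The absorption step and the final step are correct as you state them. The genuine gap is inside the reachability lemma, in the ``mobile bridge'' mechanism. The identity $x_c^*(t)=x_c(t)+\xi_c(t+1)$ holds only when $\mathcal{N}(c,x(t))=\{c\}$, and with $|\mathcal{S}|>1$, $\delta\le\epsilon/2$, a controlled agent can never be steered into isolation: if $c$ is attached to $k\ge 1$ agents at common value $c_0$ with $|x_c(t)-c_0|=d\le\epsilon$, then $\widetilde{x}_c(t)$ lies within $d/(k+1)$ of $c_0$, so $|x_c(t+1)-c_0|\le d/(k+1)+\delta\le \epsilon/2+\epsilon/2=\epsilon$ and the agent remains a neighbor of the cluster (which moreover drifts toward it). So there is no prescribable free random walk crossing gaps; what you can prescribe is signed noise that drags the \emph{whole} cluster, at rate at least $a/n$ per step rather than $a$ --- this is exactly the paper's estimate (\ref{extropindis}) in Lemma~\ref{clusterconsen}, and it changes your per-gap step count from $\lceil 1/a\rceil$ to $O(n/a)$ (fixable, since uniformity survives).

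Two further cases are missing, and they are ideas rather than technicalities. First, a fragment containing \emph{no} controlled agent cannot be herded by any noise protocol; the paper handles it by invoking the noise-free HK convergence result (Lemma~\ref{HKfrag}) for that fragment and by prescribing a holding protocol on the controlled fragments so that the merge via Lemma~\ref{clusterconsen} happens after both pieces have settled --- your proposal counts only prescribed increments of controlled agents and has no substitute for this waiting argument. Second, a connected configuration with $d_\mathcal{V}(t)>\epsilon$ has no gaps to bridge, and HK averaging alone need not contract it (noise-free HK from connected states can fragment), so ``herding into a common window'' does not follow from your two mechanisms; the paper needs the dedicated Appendix Lemma~\ref{consenconnected}, showing that under the half-split sign protocol (\ref{noiseproto00}) the minimum opinion rises by at least $\underline{d}=\min\{a,d_0\}$, $d_0<\frac{\epsilon}{2n^2+2n+1}$, within every $n$ steps \emph{or} the graph disconnects, which feeds an induction on $n$. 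That induction is also what delivers the configuration-uniform constants $N$ and $q$ that you correctly flag as the main obstacle: your counting of at most $n$ clusters with uniform step size $a$ does not by itself bound the time spent waiting for uncontrolled fragments to freeze or for connected-but-wide states to contract, so the uniformity claim remains unproved in your sketch.
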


Before the proofs of Theorems \ref{consthm0} and \ref{Suff_thm} we need introduce some lemmas:

\begin{lem}\cite{Krause2000}\label{monosmlem}
Suppose $\{z_i, \, i=1, 2, \ldots\}$ is a nonnegative nondecreasing (nonincreasing) sequence. Then for any $s\geq 0$, the sequence
$\{g_s(k)=\frac{1}{k}\sum_{i=s+1}^{s+k}z_i$, $k\geq 1\}$ is monotonically nondecreasing (nonincreasing) for $k$.
\end{lem}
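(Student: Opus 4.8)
The plan is to prove the claim by establishing the one-step monotonicity $g_s(k+1)\ge g_s(k)$ for every $k\ge 1$ in the nondecreasing case (and the reverse inequality in the nonincreasing case), which immediately yields monotonicity of the entire sequence. Fixing $s$, I would view $g_s(k)$ as the running mean of the window $z_{s+1},\dots,z_{s+k}$ and pass from $k$ to $k+1$ by appending the single new term $z_{s+k+1}$. Using the elementary identity
\[
 g_s(k+1)=\frac{k\,g_s(k)+z_{s+k+1}}{k+1},
\]
the increment becomes $g_s(k+1)-g_s(k)=\dfrac{z_{s+k+1}-g_s(k)}{k+1}$, so the desired monotonicity is exactly equivalent to the statement that the newly appended term is at least as large as the current average, i.e. $z_{s+k+1}\ge g_s(k)$. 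In words, appending an above-average element to a mean can only raise it.

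This last inequality is where the monotonicity hypothesis enters. Since $\{z_i\}$ is nondecreasing, every term in the window satisfies $z_i\le z_{s+k+1}$ for $s+1\le i\le s+k$; averaging these $k$ inequalities gives $g_s(k)=\frac1k\sum_{i=s+1}^{s+k}z_i\le z_{s+k+1}$, hence $g_s(k+1)\ge g_s(k)$, and induction on $k$ finishes the nondecreasing case. The nonincreasing case is the exact mirror image: there $z_i\ge z_{s+k+1}$ on the window forces $g_s(k)\ge z_{s+k+1}$ and therefore $g_s(k+1)\le g_s(k)$. There is no substantive obstacle beyond the bookkeeping of isolating the last term and reducing the comparison to the single scalar inequality above; I note in passing that nonnegativity of $\{z_i\}$ is never actually used and only the monotonicity matters, though I would retain the hypothesis as stated to match the way the lemma is invoked later.
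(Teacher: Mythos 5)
Your proof is correct. Note that the paper itself gives no argument for this lemma --- it is quoted with a citation to Krause (2000) --- so there is no in-paper proof to compare against; your one-step computation $g_s(k+1)-g_s(k)=\frac{z_{s+k+1}-g_s(k)}{k+1}$ combined with the observation that the appended term dominates (respectively, is dominated by) the current window average is exactly the standard argument, and in fact no induction is even needed since the one-step inequality holds for every $k$ directly. Your side remark is also accurate: nonnegativity of $\{z_i\}$ plays no role in the proof, only the monotonicity hypothesis does.
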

\begin{lem}\label{HKfrag}\cite{Blondel2009}
Suppose $\mathcal{S}=\emptyset$, then for every $1\leq i\leq n$ and $x_i(0)\in[0,1]$, there exist constant $T_0\geq 0, x_i^*\in[0,1]$ such that $x_i(t)=x_i^*$ for $t\geq T_0$, and either $x_i^*=x_j^*$ or $|x_i^*-x_j^*|> \epsilon$ holds for any $i, j$.
\end{lem}

In what follows, the ever appearing time symbols $t$ (or $T$, etc.) will all refer to the random variables $t(\omega)$ (or $T(\omega)$, etc.) on the probability space $(\Omega,\mathcal{F},P)$, and for simplicity, they will be still written as $t$ (or $T$).
%

In the rest process of the proofs of Theorems \ref{consthm0} and \ref{Suff_thm}, we only consider the case $|\mathcal{S}|>1$, and the proof for the case $|\mathcal{S}|=1$ can be obtained in a similar fashion.

\begin{lem}\label{robconspeci}
For the system (\ref{basicHKmodel})-(\ref{neigh}) with conditions of Theorem \ref{Suff_thm} i), if there exists a finite time $0\leq T<\infty$ such that $d_\mathcal{V}(T)\leq \epsilon$, then on $\{T<\infty\}$, we have
 $d_\mathcal{V}(t)\leq 2\delta$ for all $t> T$.
\end{lem}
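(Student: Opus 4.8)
The plan is to show that completeness of the confidence graph is self-perpetuating and that it forces an immediate contraction of the opinion diameter to at most $2\delta$. Everything reduces to one structural fact: when the graph is complete, the averaging step in (\ref{xit}) is identical for every agent, so all non-controlled agents collapse to a single point and the controlled agents are pinned within $\delta$ of it. I would run this as an induction on $t$ starting from $T$, working throughout on the full-measure event where the a.s. noise bound $|\xi_i(t)|\le\delta$ of Theorem \ref{Suff_thm} i) holds, intersected with $\{T<\infty\}$. Recall that in the case under consideration, $|\mathcal{S}|>1$, so $\delta\in(0,\epsilon/2]$ and hence $2\delta\le\epsilon$, which is the inequality that keeps the graph complete.

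First I would record the base step. Since $d_{\mathcal V}(T)\le\epsilon$, every pair satisfies $|x_i(T)-x_j(T)|\le\epsilon$, so $(i,j)\in\mathcal E(T)$ and therefore $\mathcal N(i,x(T))=\mathcal V$ for all $i$. Consequently the averaging term in (\ref{xit}) equals the global mean $\bar x(T):=n^{-1}\sum_{j\in\mathcal V}x_j(T)$ for every $i$, independently of $i$. Propagating one step: for $i\notin\mathcal S$ we get $x_i^*(T)=\bar x(T)\in[0,1]$, so the clamping in (\ref{basicHKmodel}) is inactive and $x_i(T+1)=\bar x(T)$; for $i\in\mathcal S$ we get $x_i^*(T)=\bar x(T)+\xi_i(T+1)$ with $|\xi_i(T+1)|\le\delta$. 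Because $\bar x(T)\in[0,1]$, truncating $x_i^*(T)$ to $[0,1]$ cannot increase its distance to $\bar x(T)$, so $|x_i(T+1)-\bar x(T)|\le\delta$ holds for controlled agents as well. Hence every agent lies in $[\bar x(T)-\delta,\bar x(T)+\delta]$ and $d_{\mathcal V}(T+1)\le 2\delta$.

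To close, I would note that $2\delta\le\epsilon$ makes the graph complete again at time $T+1$, so the identical one-step computation gives the inductive implication: $d_{\mathcal V}(t)\le 2\delta\ (\le\epsilon)$ implies $d_{\mathcal V}(t+1)\le 2\delta$. Thus $d_{\mathcal V}(t)\le 2\delta$ for all $t>T$. The only place needing genuine care—and the nearest thing to an obstacle—is the clamping estimate for controlled agents: one must verify that projection onto $[0,1]$ is nonexpansive toward the interior point $\bar x(T)$, which is precisely what carries the $\delta$-deviation bound through the nonlinearity in (\ref{basicHKmodel}); the rest is bookkeeping with the mean and the elementary inequality $\delta\le\epsilon/2$.
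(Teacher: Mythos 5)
Your proof is correct and follows essentially the same route as the paper's: completeness of the confidence graph at time $T$ makes every agent's averaging term equal the global mean, the noise bound then gives $d_{\mathcal V}(T+1)\le 2\delta\le\epsilon$, and iterating this one-step argument yields the claim. Your explicit verification that the projection onto $[0,1]$ is nonexpansive toward the interior point $\bar x(T)$ is a small point the paper passes over silently, but it does not change the argument.
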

\begin{proof}
Denote $\widetilde{x}_i(t)=|\mathcal{N}(i, x(t))|^{-1}\sum_{j\in \mathcal{N}(i, x(t))}x_j(t)$, $t\geq 0$, and this notation remains valid for the rest of the context. If $d_\mathcal{V}(T)\leq \epsilon$,
by (\ref{neigh}) we have
\begin{equation}\label{dtitera_0}
  \tilde{x}_i(T)=\frac{1}{n}\sum\limits_{j=1}^{n}x_j(T), \, \, i\in\mathcal{V}.
\end{equation}
Since $|\xi_i(t)|\leq \delta$ a.s., we obtain a.s.
\begin{equation}\label{dtitera}
\begin{split}
  d_\mathcal{V}(T+1)& =\max\limits_{1\leq i, j\leq n}|x_i(T+1)-x_j(T+1)| \\
    & \leq\max\limits_{1\leq i, j\leq n}(|I_{i\in\mathcal{S}}\xi_i(T+1)|+|I_{j\in\mathcal{S}}\xi_j(T+1)|)\\
    &\leq 2\delta\leq\epsilon.
\end{split}
\end{equation}
Repeating (\ref{dtitera_0}) and (\ref{dtitera}) yields the conclusion.
\end{proof}

\begin{lem}\label{clusterconsen}
For system (\ref{basicHKmodel})-(\ref{neigh}) with conditions of Theorem \ref{Suff_thm}, if at the initial moment there exist subsets $\mathcal{V}_1, \mathcal{V}_2\subset\mathcal{V}$ such that $\mathcal{V}_1\bigcup\mathcal{V}_2=\mathcal{V}$, $d_{\mathcal{V}_k}(0)\leq \epsilon, k=1,2$, and $ \mathcal{V}_1\bigcap\mathcal{V}_2=\emptyset$, $|x_i(0)-x_j(0)|>\epsilon$ for $i\in\mathcal{V}_1, j\in\mathcal{V}_2$, then there exist constants $L_0\geq 0$ and $p_0>0$ such that
$P\{d_\mathcal{V}(L_0)\leq \epsilon\}\geq p_0$ and $\min_ix_i(t)\geq\min_ix(0), \max_ix_i(t)\leq \max_ix_i(0)$ for $0\leq t\leq L_0$.
\end{lem}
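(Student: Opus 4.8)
The plan is to exhibit a single noise realization, of positive probability and spanning a deterministic finite number of steps $L_0$, that drives the two separated clusters into one window of width at most $\epsilon$ while never letting an opinion leave $[\min_i x_i(0),\max_i x_i(0)]$. Since each cluster has diameter at most $\epsilon$ and the two are more than $\epsilon$ apart, a short argument shows one cluster lies entirely below the other; relabel so that $\mathcal{V}_1$ is the lower one. As $\mathcal{S}\neq\emptyset$, up to the obvious up-down symmetry we may assume $\mathcal{V}_1$ contains a controlled agent. The probabilistic bookkeeping is then routine: the event I construct prescribes, for each controlled agent at each of the $L_0$ steps, which of the four categories in hypothesis ii) its noise falls into, so by independence it has probability at least $p^{|\mathcal{S}|L_0}=:p_0>0$. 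Thus everything reduces to choosing the categories so that $d_{\mathcal{V}}(L_0)\le\epsilon$ with the range never expanding.

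I would split the construction into two phases. In the \emph{approach phase} I use the large categories, $\xi_i(t)\ge a$ for controlled agents in $\mathcal{V}_1$ and $\xi_i(t)\le -a$ for any in $\mathcal{V}_2$. While the clusters remain more than $\epsilon$ apart, each is a complete subgraph of diameter at most $\epsilon$, so the update (\ref{xit}) sends every uncontrolled member of a cluster to that cluster's mean and every controlled member to the mean plus its own noise; hence the cluster keeps intra-diameter at most $\delta\le\epsilon/2$ and its mean is translated toward the other by $\tfrac{1}{|\mathcal{V}_k|}\sum_{i\in\mathcal{S}\cap\mathcal{V}_k}\xi_i(t+1)$, an amount bounded below by the positive constant $a/n$ for the lower cluster. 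Because all opinions stay in $[0,1]$ and the rate is bounded below, after a deterministic finite number of steps the gap first drops to at most $\epsilon$ and the two clusters become connected. A cluster containing no controlled agent simply freezes at its mean by Lemma \ref{HKfrag}, which causes no difficulty.

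In the \emph{merge phase} I switch to the small categories (noise in $[0,a]$ for controlled agents in the lower part, $[-a,0]$ for those in the upper part). Now that the confidence graph is connected, the averaging in (\ref{xit}) contracts the spread, and the near-zero noise is used to nudge the lagging uncontrolled agents so that no confidence edge is broken; I would show that within a bounded number of such steps the whole system falls into a window of width at most $\epsilon$, whereupon the graph is complete and, exactly as in (\ref{dtitera_0})--(\ref{dtitera}), $d_{\mathcal{V}}\le 2\delta\le\epsilon$. Throughout both phases the monotonicity of the extremes is automatic: every pre-noise value $\widetilde{x}_i(t)$ is a convex average lying in $[\min_j x_j(t),\max_j x_j(t)]$, while the controlled agents that could attain the global minimum (maximum) receive nonnegative (nonpositive) noise, so $\min_i x_i(t)$ never decreases and $\max_i x_i(t)$ never increases. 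Taking $L_0$ to be the total step count of the two phases and $p_0=p^{|\mathcal{S}|L_0}$ then yields the claim.

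The hard part is the merge phase. At the instant the clusters first touch, the diameter necessarily exceeds $\epsilon$ by the sum of the two intra-cluster widths, and HK averaging on a connected-but-not-complete graph can \emph{re-split} rather than contract, which is the generic behavior of HK dynamics. The role of hypothesis ii) is precisely to supply small-magnitude noise that keeps the lagging agents attached and forces the connected component to shrink below $\epsilon$; making this ``stays connected and contracts'' step fully rigorous, and simultaneously verifying that the extremes keep moving inward, is where the real work lies. This is also why I would keep the clusters tight during the approach (using the smaller increments when the gap gets close), so that the configuration entering the merge phase already has width close to $\epsilon$ and the contraction terminates after only a few steps.
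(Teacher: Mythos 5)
Your approach phase coincides with the paper's own: it is exactly protocol (\ref{noiseproto0}), with the same $a/n$ lower bound on the per-step decrease of the inter-cluster gap, the same deterministic step count, the same treatment of an uncontrolled cluster via Lemma \ref{HKfrag}, and the same probability bookkeeping $p^{|\mathcal{S}|L_0}$. The genuine gap is the merge phase, which you yourself flag as ``where the real work lies'' --- and then leave unproved; since that is precisely the step the lemma needs, the proposal is a plan rather than a proof. Worse, the mechanism you sketch for it cannot work as stated. First, noise enters only through controlled agents $i\in\mathcal{S}$, so the ``lagging uncontrolled agents'' receive no noise at all and can only be moved indirectly through averaging. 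Second, and decisively, the small categories of hypothesis ii) guarantee only $P\{0\leq \xi_i(t)\leq a\}\geq p$: conditioned on that event the realized noise may be arbitrarily close to $0$, so no minimum displacement --- hence no contraction rate and no deterministic bound on the merge duration to plug into $p_0=p^{|\mathcal{S}|L_0}$ --- can be extracted from them. Quantified drift is available only from the events $\{\xi_i(t)\geq a\}$ and $\{\xi_i(t)\leq -a\}$. Your worry that connected-but-not-complete HK dynamics can re-split is well founded, but the near-zero-noise fix has no quantitative teeth.

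The paper sidesteps connectivity maintenance by tiny perturbations altogether. Once the clusters first come within $\epsilon$ (so $d_\mathcal{V}\leq 2\epsilon$), it keeps using the large categories but redirects them by \emph{position} rather than by cluster membership: protocol (\ref{noiseproto1}) prescribes $\xi_i(t+1)\in[a,\delta]$ when the pre-noise average $\widetilde{x}_i(t)$ lies in the lower half of the current opinion range, and $\xi_i(t+1)\in[-\delta,-a]$ when it lies in the upper half. Because $\delta\leq\epsilon/2$ and the range is at most $2\epsilon$, every controlled agent then lands within $\epsilon$ of, yet at least $a$ away from, the nearest extreme opinion, so it remains a neighbor of the extreme agents; consequently the contraction estimate (\ref{extropindis}) still applies to the global range, which shrinks by at least $a/n$ per step, yielding the deterministic count $\bar{L}_1\leq n(d_\mathcal{V}(\bar{L}_0)-\epsilon)/a$, after which Lemma \ref{robconspeci} locks in $d_\mathcal{V}\leq 2\delta$. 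To repair your argument, replace the small-category merge phase with this midpoint-directed large-noise protocol; your approach phase, monotonicity observation, and probability count can stand as written.
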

\begin{proof}
This proof uses the idea that  ``transforming the analysis of a stochastic system into the design of control algorithms" first proposed
by \cite{Chen2017}.
At the initial moment, the systems forms 2 separate subgroups $\mathcal{V}_1, \mathcal{V}_2$ of which one is not neighboring with the other.
By (\ref{basicHKmodel}), $d_{\mathcal{V}_k}(1)\leq 2\delta\leq \epsilon, k=1,2$.
Before one subgroup enters the neighbor region of the other, for each $i\in \mathcal{V}$, we have
\begin{equation}\label{xitvk}
\begin{split}
  x_i(t+1)=\frac{1}{|\mathcal{V}_k|}\sum\limits_{j\in\mathcal{V}_k}x_j(t)+I_{i\in\mathcal{S}}\xi_i(t+1),
  \end{split}
\end{equation}
when $i\in\mathcal{V}_k, k=1,2$.
Suppose $\max_{i\in\mathcal{V}_1}x_i(0)<\min_{i\in\mathcal{V}_2}x_i(0)$, and consider the following noise protocol: for $t\geq 0$,
\begin{equation}\label{noiseproto0}
  \left\{
    \begin{array}{ll}
      \xi_i(t+1)\in[a,\delta], & \hbox{if}\quad i\in\mathcal{V}_1; \\
      \xi_i(t+1)\in[-\delta,-a], & \hbox{if}\quad i\in\mathcal{V}_2.
    \end{array}
  \right.
\end{equation}
Here $a\in (0,\delta)$ is a constant defined in Theorem \ref{Suff_thm}.
Under the protocol (\ref{noiseproto0}), it is easy to check that before one subgroup enters the neighbor region of the other, and by (\ref{xitvk}), we will have
\begin{equation}\label{extropindis}
\begin{split}
|\max_{i\in\mathcal{V}_2}x_i(t+1)-\min_{i\in\mathcal{V}_1}x_i(t+1)|<|\max_{i\in\mathcal{V}_2}x_i(t)-\min_{i\in\mathcal{V}_1}x_i(t)|-\frac{a}{n},
\end{split}
\end{equation}
suggesting that after each step, the maximum difference of opinion values decreases at least $\frac{a}{n}$.
This implies that there must exist a constant $\bar{L}_0\leq \frac{n(d_\mathcal{V}(0)-\epsilon)}{a}$ such that under the protocol (\ref{noiseproto0}),
\begin{equation}\label{firstmeet}
  \min_{i\in\mathcal{V}_2}x_i(\bar{L}_0)-\max_{i\in\mathcal{V}_1}x_i(\bar{L}_0)\leq \epsilon,\quad d_\mathcal{V}(\bar{L}_0)\leq 2\epsilon.
\end{equation}
From moment $\bar{L}_0+1$, design the following protocol: for $i\in\mathcal{V}, t\geq \bar{L}_0$,
\begin{equation}\label{noiseproto1}
  \left\{
    \begin{array}{ll}
      \xi_i(t+1)&\in[a,\delta],  \hbox{if}~\min\limits_{j\in\mathcal{V}}x_j(t)\leq\widetilde{x}_i(t)\leq \min\limits_{j\in\mathcal{V}}x_j(t)+\frac{d_{\mathcal{V}}(t)}{2}; \\
      \xi_i(t+1)&\in[-\delta,-a],  \hbox{if}~
      \min\limits_{j\in\mathcal{V}}x_j(t)+\frac{d_{\mathcal{V}}(t)}{2}<\widetilde{x}_i(t)\leq \max\limits_{j\in\mathcal{V}}x_j(t).
    \end{array}
  \right.
\end{equation}
Since $d_\mathcal{V}(\bar{L}_0)\leq \epsilon$ by (\ref{firstmeet}), we can check that for each $i\in\mathcal{S}$, under the protocol (\ref{noiseproto1}), either $x_i(t)\in[\min_{i\in\mathcal{V}}x_i(t)+a,\min_{i\in\mathcal{V}}x_i(t)+\epsilon]$ or $x_i(t)\in[\max_{i\in\mathcal{V}}x_i(t)-\epsilon,\max_{i\in\mathcal{V}}x_i(t)-a]$ or both hold, implying it is neighbor to agents with extreme opinion. By (\ref{basicHKmodel}), we know that (\ref{extropindis}) also holds. This implies there exists a constant $\bar{L}_1\leq \frac{n(d_\mathcal{V}(\bar{L}_0)-\epsilon)}{a}$ such that under the protocol (\ref{noiseproto1}),
\begin{equation}\label{finalmeet}
  \max_{i\in\mathcal{V}}x_i(\bar{L}_0+\bar{L}_1)-\min_{i\in\mathcal{V}}x_i(\bar{L}_0+\bar{L}_1)\leq \epsilon.
\end{equation}
By independence of $\xi_i(t), i\in\mathcal{V}, t\geq 1$, we know that the probability of the protocol (\ref{noiseproto0}) occurring $\bar{L}_0$ times and protocol (\ref{noiseproto1}) occurring $\bar{L}_1$ times is no less than $p^{|\mathcal{S}|(\bar{L}_0+\bar{L}_1)}>0$. Moreover, under the protocols (\ref{noiseproto0}) and (\ref{noiseproto1}), by Lemma \ref{monosmlem}, it holds $\min_ix_i(t)\geq\min_ix(0), \max_ix_i(t)\leq \max_ix_i(0)$ for $0\leq t\leq \bar{L}_0+\bar{L}_1$.
Let $L_0=\bar{L}_0+\bar{L}_1, p_0=p^{nL_0}$ and consider Lemma \ref{robconspeci}, then we obtain the conclusion.
\end{proof}

\begin{lem}\label{consposiprob}
Suppose the noise satisfies the conditions of Theorem \ref{Suff_thm} i), then for any $x(0)\in[0,1]^n, \epsilon\in(0,1]$, there exist a noise protocol and constants $L_n>0, 0<p_n<1$ such that $P\{d_\mathcal{V}(L_n)\leq \epsilon\}\geq p_n$. Furthermore, if $d_\mathcal{V}(0)>\epsilon$, it has under the noise protocol that $\min_ix_i(t)\geq\min_ix_i(0), \max_ix_i(t)\leq \max_ix_i(0)$ for $0\leq t\leq L_n$.
\end{lem}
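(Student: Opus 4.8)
The plan is to follow the ``transform stochastic analysis into control design'' philosophy already used for Lemma \ref{clusterconsen}: I will exhibit a single deterministic, interval-valued noise protocol which, for \emph{every} realization of the noises inside the prescribed intervals, steers the system from the given $x(0)$ to a state with $d_\mathcal{V}\le\epsilon$ within a fixed number $L_n$ of steps, and which at each step confines each controlled agent's noise to one of the four ranges $[0,a],[a,\delta],[-a,0],[-\delta,-a]$, each of which has probability at least $p$ by Theorem \ref{Suff_thm} ii). Independence then yields $P\{d_\mathcal{V}(L_n)\le\epsilon\}\ge p_n:=p^{nL_n}>0$. Since Lemma \ref{clusterconsen} already solves the two-cluster case, the whole task is to generalise it to an arbitrary initial configuration, which I would do by reducing the number of well-separated clusters one at a time; because the protocol only ever biases opinions \emph{inward}, Lemma \ref{monosmlem} will keep $\min_i x_i(t)\ge\min_i x_i(0)$ and $\max_i x_i(t)\le\max_i x_i(0)$ throughout.

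\emph{Phase 1 (reaching a clustered configuration).} First I would drive the system, in a fixed number of steps, to a configuration in which $\mathcal{V}$ splits into clusters $C_1,\dots,C_m$ (ordered by position), each of diameter $\le\delta$ and hence a clique, with consecutive clusters more than $\epsilon$ apart. The idea is that the noise-free HK dynamics collapses every connected component to a point in finite time $T_0$ by Lemma \ref{HKfrag}; running the true system instead with a small inward-biased noise (noise in $[0,a]$ for controlled agents in the lower part of the range and in $[-a,0]$ for those in the upper part) keeps $\max_i x_i$ non-increasing and $\min_i x_i$ non-decreasing, contracts each isolated component toward a tight clique, and, by strictness of the limiting gaps, preserves the separation $>\epsilon$ between distinct clusters.

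\emph{Phase 2 (merging clusters and finishing).} Because $\mathcal{S}\ne\emptyset$, at least one cluster, say $C^{*}$, contains a controlled agent. Applying to $C^{*}$ the drift protocol (\ref{noiseproto0}), pushing $C^{*}$ toward its nearest neighbouring cluster on the interior side, the clean averaging relation (\ref{xitvk}) holds, so $C^{*}$ stays a clique of diameter $\le\delta$ while its mean advances by at least $a/n$ per step exactly as in (\ref{extropindis}); hence within at most $n/a$ steps $C^{*}$ meets a neighbour, and a short contraction of type (\ref{noiseproto1}) coalesces them into a single clique of diameter $\le\delta$ still containing a controlled agent. This lowers $m$ by one, and since the drift is always directed inward the extremes are never pushed outward, so the required min/max bounds persist. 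Iterating at most $m-1\le n-1$ times leaves one cluster, and a final application of the contraction protocol (\ref{noiseproto1}), as in the passage from (\ref{firstmeet}) to (\ref{finalmeet}) of Lemma \ref{clusterconsen} and via Lemma \ref{robconspeci}, forces $d_\mathcal{V}\le\epsilon$.

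Summing the deterministic, uniformly bounded durations of the three phases gives a finite $L_n$ depending only on $n,a,\epsilon$ and $x(0)$, and the per-step bound $p^{n}$ yields $p_n=p^{nL_n}$, completing the estimate. The step I expect to be genuinely delicate is Phase 1: Lemma \ref{HKfrag} asserts clustering only for the \emph{exactly} noise-free dynamics, whereas here every controlled agent must carry a noise that can be made small but never zero and, for the positive-probability argument, must be allowed to range over a whole interval. Proving that the perturbed dynamics still collapses each component into a clique of diameter $\le\delta$ while keeping every inter-cluster gap above $\epsilon$, uniformly over all admissible noise realisations, is the crux; once such a clique-and-gap structure is secured, the merging in Phase 2 reduces transparently to the mechanisms already established in Lemma \ref{clusterconsen}.
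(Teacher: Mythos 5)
Your overall probabilistic scaffolding is sound and matches the paper's philosophy: design a state-dependent protocol confining each controlled agent's noise to one of the intervals $[a,\delta],[0,a],[-a,0],[-\delta,-a]$, each of per-step probability at least $p$, so that a deterministic success within a bounded horizon $L_n$ yields $p_n=p^{nL_n}$; your Phase 2 merging is essentially Lemma \ref{clusterconsen} and is fine. The genuine gap is exactly the step you flag as ``delicate'': Phase 1 does not work as stated, and not merely for technical reasons. The hypotheses of Theorem \ref{Suff_thm} only guarantee probability $\geq p$ for the noise lying in $[0,a]$ or $[-a,0]$; the constant $a$ is \emph{given}, not at your disposal, and nothing prevents the conditional law from concentrating at the endpoint (e.g.\ an atom at $a$). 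Since your positive-probability argument requires the target configuration to be reached for \emph{every} realization inside the prescribed intervals, your ``small inward-biased noise'' can in fact be of size $a$ at every step, so over the noise-free convergence horizon $T_0$ of Lemma \ref{HKfrag} the cumulative perturbation is of order $T_0a$. This is ample to alter the connectivity structure, merge or displace would-be clusters, and destroy the claimed invariance of inter-cluster gaps $>\epsilon$ --- indeed, the destabilization of HK fragmentation by arbitrarily small persistent noise is the central phenomenon of this very paper, so no uniform ``clique-and-gap at a fixed time'' statement of the kind Phase 1 needs can hold.

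The paper avoids this obstacle with a different decomposition that you are missing: induction on the group size $n$. Under the single inward protocol (\ref{noiseproto00}) there is a dichotomy: either the interaction graph stays connected, in which case Appendix Lemma \ref{consenconnected} shows the diameter contracts at a definite rate $\underline{d}=\min\{a,d_0\}$ per $O(n)$ steps (no fragmentation structure ever needs to be tracked), so $d_\mathcal{V}\leq\epsilon$ within a bounded time; or the graph disconnects into two subgroups of size $\leq k$, to which the induction hypothesis (or Lemma \ref{HKfrag}, if a subgroup contains no controlled agent) applies, with a holding protocol $\xi_i(t)\in[a,\delta]$ used to synchronize the two subgroups' consensus times before Lemma \ref{clusterconsen} merges them. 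In short: rather than forcing the noisy system to imitate noise-free fragmentation and then repairing it (which cannot be done uniformly over admissible realizations), the paper lets disconnection happen and recurses on the pieces. Repairing your proof essentially requires importing both the induction and a quantitative connected-case contraction lemma of the type of Lemma \ref{consenconnected}.
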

\begin{proof}
If $d_\mathcal{V}(0)\leq \epsilon$, the conclusion holds directly from Lemma \ref{robconspeci}. Now we consider the case of $d_\mathcal{V}(0)> \epsilon$, and use the method of induction for the group size $n$. Note that there exist constants $a\in(0,\delta),p\in(0,1)$ such that
$P\{\xi_i(t)\geq a\}\geq p$ and $P\{\xi_i(t)\leq -a\}\geq p$. When $n=2$, consider the following noise protocol:
for $i\in\mathcal{S}$, if $x_i(0)\leq \frac{x_1(0)+x_2(0)}{2}$, take $\xi_i(1)\in[a,\delta]$; otherwise, take $\xi_i(1)\in[-\delta,-a]$. It can be easily seen that under the protocol, the opinion difference of the two agents will decrease at least $a$ for each step. If $d_\mathcal{V}(1)\leq \epsilon$, we obtain the conclusion by taking $L_2=1, p_2=p^2$ since the above protocol occurs with probability no less than $p^2$. Otherwise, let $\bar{L}=\frac{1-\epsilon}{a}$ and continue the above protocol $\bar{L}$ times, then we know that there must exist a constant $L_2\leq \bar{L}$ such that $d_\mathcal{V}(L_2)\leq \epsilon$. By independence of $\xi_i(t), i\in\mathcal{V}, t\geq 1$, we can take $p_2=p^{2\bar{L}}$, then the conclusion holds for $n=2$. Suppose the conclusion holds for $n=k\geq 2$, now we consider the case of $n=k+1$.

Let $\widetilde{x}_i(t)=|\mathcal{N}(i, x(t))|^{-1}\sum\limits_{j\in \mathcal{N}(i, x(t))}x_j(t)$, and consider the following noise protocol: For $i\in\mathcal{S}, t\geq 1$,
\begin{equation}\label{noiseproto00}
  \left\{
    \begin{array}{ll}
      \xi_i(t+1)&\in[a,\delta],  \hbox{if}~
       \min\limits_{j\in\mathcal{V}}x_j(t)\leq\widetilde{x}_i(t)\leq \min\limits_{j\in\mathcal{V}}x_j(t)+\frac{d_{\mathcal{V}}(t)}{2}; \\
      \xi_i(t+1)&\in[-\delta,-a], \hbox{if}~
      \min\limits_{j\in\mathcal{V}}x_j(t)+\frac{d_{\mathcal{V}}(t)}{2}<\widetilde{x}_i(t)\leq \max\limits_{j\in\mathcal{V}}x_j(t).
    \end{array}
  \right.
\end{equation}
By Lemma \ref{monosmlem}, it is easy to obtain that under the protocol (\ref{noiseproto00}), $\min_{j\in\mathcal{V}}x_j(t)$ is nondecreasing, and $\max_{j\in\mathcal{V}}x_j(t)$ is nonincreasing. Let $\bar{T}_0$ be the first moment when the graph of the system (\ref{basicHKmodel})-(\ref{neigh}) is not connected under the protocol (\ref{noiseproto00}).
We can prove that if the graph of (\ref{basicHKmodel})-(\ref{neigh}) remains connected under the protocol (\ref{noiseproto00}), the system will
attain a quasi-consensus in a constant period $\bar{L}_0=2n\frac{d_\mathcal{V}(0)-\epsilon}{\underline{d}}$ where $\underline{d}=\min\{a,d_0\}, 0<d_0<\frac{\epsilon}{2n^2+2n+1}$ (The proof of this fact will be given in Appendix as Lemma \ref{consenconnected}).
We then know that $P\{\bar{L}_0\leq \bar{T}_0<\infty\}=0$. Without loss of generality, suppose $\bar{T}_0=0$ a.s., i.e., at the initial moment, the system is not connected, and denote the two subgroups as $\mathcal{V}_1$ and $\mathcal{V}_2$; then no agent in group $\mathcal{V}_1$ is the neighbor of agents in $\mathcal{V}_2$, and vice versa.
Suppose $\max_{i\mathcal{V}_1}x_i(0)<\min_{i\in\mathcal{V}_2}x_i(0)$.

First consider the case when there is no agent controlled by noise in one of the two subgroups, say $\mathcal{V}_1$. By Lemma \ref{HKfrag}, there is a constant $\bar{L}_1$ such that $\mathcal{V}_1$ converges in $\bar{L}_1$.
By assumption, there is a constant $\bar{L}_2$ such that $\mathcal{V}_2$ reaches a quasi-consensus in $\bar{L}_2$ with a positive probability $\bar{p}_2$. If $\bar{L}_1\leq \bar{L}_2$, by Lemma \ref{clusterconsen}, there exist constants $L_0\geq 0, 0<p_0<1$ such that $P\{d_\mathcal{V}(\bar{L}_2+L_0)\leq\epsilon|d_{\mathcal{V}_2}(\bar{L}_2)\leq \epsilon\}\geq p_0$.
Let $L_{k+1}=\bar{L}_2+L_0, p_{k+1}=\bar{p}_2p_0$, then the conclusion holds for $n=k+1$. Otherwise, if $\bar{L}_1> \bar{L}_2$, consider the following noise protocol: for $i\in\mathcal{S}, \bar{L}_2<t\leq\bar{L}_1$, $\xi_i(t)\in[a,\delta]$. By Lemma \ref{monosmlem}, we know that during the period from $\bar{L}_2+1$ to $\bar{L}_1$, $\mathcal{V}_1$ and $\mathcal{V}_2$ will not enter the neighbor region of each other, and $\mathcal{V}_1$ reaches convergence in $\bar{L}_1$ while $\mathcal{V}_2$ keeps quasi-consensus. In other words, $\mathcal{V}_2$ reaches a quasi-consensus in $\bar{L}_1$ with a positive probability $\bar{p}_2p^{|\mathcal{S}|(\bar{L}_1-\bar{L}_2)}$. Recalling Lemma \ref{clusterconsen} and the former argument of the case $\bar{L}_1\leq \bar{L}_2$, we know the conclusion holds for $n=k+1$ when $\bar{L}_1> \bar{L}_2$. Hence the conclusion holds when there is only one subgroup controlled by noise.

Now we consider the case when both $\mathcal{V}_1$ and $\mathcal{V}_2$ are intervened by noise. By assumption, there exist constants $\bar{L}_1, \bar{L}_2>0$ and $0<\bar{p}_1,\bar{p}_2<1$, such that $\mathcal{V}_1$ reaches a quasi-consensus in $\bar{L}_1$ with probability $\bar{p}_1$ and $\mathcal{V}_2$ achieves a quasi-consensus in $\bar{L}_2$ with probability $\bar{p}_2$. If $\bar{L}_1= \bar{L}_2$, the conclusion holds by Lemma \ref{clusterconsen}. Suppose $\bar{L}_1\geq \bar{L}_2$ without loss of generality, and consider the following noise protocol: for $i\in\mathcal{S}\bigcap\mathcal{V}_2, \bar{L}_2<t\leq\bar{L}_1$, $\xi_i(t)\in[a,\delta]$. By Lemma \ref{monosmlem}, we know that during the period from $\bar{L}_2+1$ to $\bar{L}_1$, $\mathcal{V}_1$ and $\mathcal{V}_2$ will not enter the neighbor region of each other, and $\mathcal{V}_1$ reaches a quasi-consensus in $\bar{L}_1$ while $\mathcal{V}_2$ keeps quasi-consensus. Thus, they both reach quasi-consensus in $\bar{L}_1$ with a positive probability no less than $\bar{p}_1\bar{p}_2p^{|\mathcal{S}|(\bar{L}_1-\bar{L}_2)}$. By Lemma \ref{clusterconsen}, there exist constants $L_0\geq 0, 0<p_0<1$ such that $P\{d_\mathcal{V}(\bar{L}_1+L_0)\leq\epsilon|d_{\mathcal{V}_2}(\bar{L}_1)\leq \epsilon, d_{\mathcal{V}_1}(\bar{L}_1)\leq \epsilon\}\geq p_0$.
Let $L_{k+1}=\bar{L}_1+L_0, p_{k+1}=\bar{p}_1\bar{p}_2p^{|\mathcal{S}|(\bar{L}_1-\bar{L}_2)}p_0$, then the conclusion holds for $n=k+1$ when both of the subgroups are intervened by noise.

To sum up, the conclusion holds for $n=k+1$. This completes the proof.
\end{proof}

\noindent{\it Proof of Theorem \ref{Suff_thm}:} Define $U(L)=\{\omega:$ (\ref{basicHKmodel})-(\ref{neigh}) does not reach quasi-consensus in period $L\}$
and $U=\{\omega:$ (\ref{basicHKmodel})-(\ref{neigh}) does not reach quasi-consensus in finite time$\}$.
By Lemma \ref{consposiprob}, there exist constants $L_n>0, 0<p_n<1$ such that
$
  P\{U(L_n)\}\leq 1-p_n<1.
$
Since $x(0)$ is arbitrarily given in $[0,1]^n$, following the procedure of Lemma \ref{consposiprob}, it has
\begin{equation*}
  P\{U(mL_n)|U((m-1)L_n)\}\leq 1-p_n,\quad m>1.
\end{equation*}
Then
\begin{equation*}
\begin{split}
  P\{U\} =&P\Big\{\bigcap\limits_{m=1}^{\infty}U(mL_n)\Big\}=\lim\limits_{m\rightarrow \infty}P\{U(mL_n)\} \\
=&\lim\limits_{m\rightarrow \infty}\prod\limits_{k=1}^{m-1}P\{U((k+1)L_n)|U(kL_n)\}\cdot P\{U(L_n)\}\\
\leq& \lim\limits_{m\rightarrow \infty}(1-p_n)^m=0,
\end{split}
\end{equation*}
and hence,
\begin{equation*}
\begin{split}
   P\{\text{ (\ref{basicHKmodel})-(\ref{neigh}) can reach quasi-consensus in finite time}\}
   = 1-P\{U\}=1.
\end{split}
\end{equation*}
This completes the proof. \hfill $\Box$

Next, we will present the necessary part of the noise induced consensus, which shows that when the noise strength has a positive probability of exceeding $\epsilon/2$, the system a.s. cannot reach a quasi-consensus.
\begin{thm}\label{Nece_thm}
Let $x(0)\in [0,1]^n$, $\epsilon>0$ are arbitrarily given.
Assume the zero-mean random noises $\{\xi_i(t), i\in\mathcal{V},t\geq 1\}$ are i.i.d. with $E\xi_1^2(1)<\infty$ or independent with $\sup_{i,t}|\xi_i(t)|<\infty, a.s.$.
If there exists a lower bound $q>0$ such that
    $P\{\xi_i(t)>\epsilon/2\}\geq q$ and $P\{\xi_i(t)<-\epsilon/2\}\geq q$,
then a.s. the system (\ref{basicHKmodel})-(\ref{neigh}) cannot reach quasi-consensus.
\end{thm}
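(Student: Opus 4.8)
The plan is to argue by contradiction and in fact prove the strong statement that quasi-consensus fails almost surely, namely that $\limsup_{t\to\infty}d_{\mathcal V}(t)\ge\epsilon+c$ a.s.\ for some deterministic $c>0$; by Definition \ref{robconsen}(iii) this gives $P\{\overline d_{\mathcal V}\le\epsilon\}=0$, which is the desired conclusion. The whole argument rests on one recurrent mechanism: whenever the population momentarily forms a single $\epsilon$-cluster (a complete graph) sitting away from the endpoints $0$ and $1$, two oppositely kicked controlled agents can be torn more than $\epsilon$ apart. As in the rest of the paper I treat $|\mathcal S|>1$ and fix two controlled agents $i^+,i^-\in\mathcal S$; the case $|\mathcal S|=1$ is easier, since a single kick exceeding $\epsilon$ already separates the lone controlled agent from any stationary neighbour. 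Two preliminaries are needed. Since $\{\xi_i(t)>\epsilon/2\}=\bigcup_m\{\xi_i(t)>\epsilon/2+1/m\}$, continuity of measure yields a deterministic margin $\beta>0$ and $q_0>0$ with $P\{\xi_i(t)>\epsilon/2+\beta\}\ge q_0$ and $P\{\xi_i(t)<-\epsilon/2-\beta\}\ge q_0$, the two noise regimes entering only to make $\beta,q_0$ uniform in $(i,t)$; and since $d_{\mathcal V}\le1\le\epsilon$ whenever $\epsilon\ge1$, I may assume $\epsilon<1$, leaving interior room $1-\epsilon>0$.

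The core computation is a one-shot dispersal estimate. If at time $t$ the graph $\mathcal G_{\mathcal V}(t)$ is complete and the common local average $\mu(t)=\frac1n\sum_j x_j(t)$ lies in the interior band $[\epsilon/2+\beta,\,1-\epsilon/2-\beta]$, then on the event $\{\xi_{i^+}(t+1)>\epsilon/2+\beta,\ \xi_{i^-}(t+1)<-\epsilon/2-\beta\}$, which has conditional probability at least $q_0^2$ and across distinct $t$ depends on disjoint noise variables, no clipping affects these two agents and $x_{i^+}(t+1)-x_{i^-}(t+1)=\xi_{i^+}(t+1)-\xi_{i^-}(t+1)>\epsilon+2\beta$, so $d_{\mathcal V}(t+1)\ge\epsilon+2\beta$. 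This is exactly where the strict margin $\beta$ is indispensable: without it one recovers only $d_{\mathcal V}(t+1)>\epsilon$, which is compatible with $\limsup d_{\mathcal V}=\epsilon$ and hence with quasi-consensus.

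To upgrade this to an ``infinitely often'' statement I would, mirroring the control-design philosophy of Lemmas \ref{clusterconsen}--\ref{consposiprob}, establish a uniform multi-step bound: there exist $N\in\mathbb N$, $c\in(0,2\beta]$ and $\rho>0$ such that from every configuration $x(t)\in[0,1]^n$ one has $P\{d_{\mathcal V}(s)\ge\epsilon+c\text{ for some }t<s\le t+N\mid\mathcal F_t\}\ge\rho$. Granting this, applying it along the disjoint blocks $((k-1)N,kN]$ and invoking the conditional Borel--Cantelli lemma forces $d_{\mathcal V}(s)\ge\epsilon+c$ for infinitely many $s$ almost surely, i.e.\ $\limsup d_{\mathcal V}\ge\epsilon+c$, completing the proof; this blockwise independence is the same device used in the telescoping estimate at the end of the proof of Theorem \ref{Suff_thm}.

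I expect the multi-step bound to be the main obstacle, and within it the boundary/clipping case. When the momentary $\epsilon$-cluster sits against an endpoint, say $\mu(t)$ near $1$, one pair of opposite kicks opens a gap of only about $\epsilon/2$, because the upward kick is absorbed by clipping at $1$. The remedy I would pursue is a short relocation sub-protocol that first drives the cluster into the interior band, organised as a dichotomy: at each step either the spread already exceeds $\epsilon+c$ (early success) or the graph stays complete and a coordinated kick moves $\mu$ a definite amount toward the interior, which since $\mu\in[0,1]$ can recur only boundedly often before the band is reached, at which point the final splitting kick applies. The delicate point is that the admissible kicks are not small---every controlled agent conditioned on the chosen event jumps by more than $\epsilon/2$---so genuine care is needed either to keep the cluster complete during relocation or to detect early success; the structural fact I would lean on is that the zero-mean hypothesis makes $\mu(t)$ a bounded martingale on complete-graph steps, reflected inward at the endpoints, so it cannot linger at a boundary and must visit the interior band. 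Finally I would record the routine modifications for $|\mathcal S|=1$ (threshold $\epsilon$ in place of $\epsilon/2$) and verify that each of the two noise regimes indeed supplies the uniform pair $(\beta,q_0)$ used throughout.
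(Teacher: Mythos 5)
Your skeleton is the same as the paper's: condition on the event that no separation has yet occurred (so $d_\mathcal{V}(t)\leq\epsilon$, the graph is complete, and every pre-noise value equals the common average $\mu$), split the cluster with one pair of opposite kicks of size exceeding $\epsilon/2$ on two controlled agents (probability at least $q^2$, using fresh noise at each step), and telescope the conditional probabilities to conclude $P\{d_\mathcal{V}(t)\leq\epsilon\ \text{eventually}\}=0$ --- the same device as at the end of the proof of Theorem \ref{Suff_thm}, which is your conditional Borel--Cantelli step in disguise. Where you differ is that you insist on more than the paper delivers, and your two worries are both legitimate: the paper's reduction of $\{\overline{d}_\mathcal{V}\leq\epsilon\}$ to $\{d_\mathcal{V}(t)\leq\epsilon\ \text{eventually}\}$ is exactly the $\limsup$-margin issue you repair with $\beta$; and the paper's one-step bound $P\{d_\mathcal{V}(t)>\epsilon\mid\cdot\}\geq q^2$ silently fails under clipping when the cluster sits within $\epsilon/2$ of an endpoint (take $\epsilon=0.8$, a cluster at $\mu=0.9$, noise supported on $[-0.5,0.5]$: the attainable spread in one step is at most $0.6<\epsilon$), which is precisely your boundary/relocation case and the only place the zero-mean hypothesis --- nowhere used in the paper's two-line estimate --- actually matters. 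One small correction in your favor: inside the band $\mu\in[\epsilon/2+\beta,\,1-\epsilon/2-\beta]$ you do not even need ``no clipping,'' since $\min(1,\mu+\xi_{i^+})\geq\mu+\epsilon/2+\beta$ and $\max(0,\mu+\xi_{i^-})\leq\mu-\epsilon/2-\beta$ hold regardless, so unbounded i.i.d. noise causes no trouble there.

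That said, as a proof your proposal has two genuine gaps. First, the uniform pair $(\beta,q_0)$ exists in the i.i.d. regime by the continuity-of-measure argument you give, but it does \emph{not} follow from the hypotheses in the merely independent regime: take $\xi_i(t)=\pm(\epsilon/2+1/t)$ with probability $q$ each (zero-mean, uniformly bounded, $P\{\xi_i(t)>\epsilon/2\}\geq q$), and no time-uniform margin $\beta$ exists, so your one-shot dispersal estimate with fixed $\beta$ collapses; rescuing the theorem there requires a different mechanism, e.g. compounding two consecutive opposite kicks after the first split (the graph is then no longer complete, the extreme agents' local averages remain displaced, and a second kick stacks to spread roughly $3\epsilon/2$), which is absent from your plan. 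Second, the ``uniform multi-step bound from every configuration'' that you yourself identify as the main obstacle is really the whole theorem, and your sketch of it is not yet sound: every admissible kick exceeds $\epsilon/2$ in magnitude, so after a single relocation step the graph is generically no longer complete and the bounded-martingale structure you invoke for $\mu(t)$ is unavailable; likewise configurations that are neither a single $\epsilon$-cluster nor already $(\epsilon+c)$-spread (e.g. several clusters at mutual distance barely above $\epsilon$, or a connected but incomplete chain spanning the interval) are not covered, and for these the choice of which controlled agent receives the upward kick must be made $\mathcal{F}_t$-measurably to avoid kicks that close rather than open the gap. So the program is more careful and more honest than the paper's published argument, but the decisive lemma is asserted, not proved.
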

\begin{proof}
We only need to prove the independent case, while the i.i.d. case can be obtained similarly. For the independent case, we only need to prove that, for any constant $T_0\geq 0$, there exists $t\geq T_0$ a.s. such that $d_\mathcal{V}(t)>\epsilon$, i.e.
\begin{equation*}
  P\Big\{\bigcup\limits_{T_0=0}^{\infty}\{d_\mathcal{V}(t)\leq\epsilon, t\geq T_0\}\Big\}=0.
\end{equation*}
Given any $T_0\geq 0$, by independence of $\xi_i(t), i\in\mathcal{V}, t\geq 1$, it has
\begin{equation*}
\begin{split}
  P\{d_\mathcal{V}(T_0+1)>\epsilon\}\geq & P\{\min\limits_{x_i(T_0),i\in\mathcal{V}}\xi_i(T_0+1)<-\frac{\epsilon}{2},\max\limits_{x_i(T_0),i\in\mathcal{V}}\xi_i(T_0+1)>\frac{\epsilon}{2}\}\\
  \geq& q^2.
  \end{split}
\end{equation*}
Hence, $P\{d_\mathcal{V}(T_0+1)\leq \epsilon\}\leq 1-q^2<1$. Similarly,
\begin{equation*}
  P\Big\{d_\mathcal{V}(t)\leq\epsilon\Big|\bigcap\limits_{T_0\leq l<t}\{d_\mathcal{V}(l)\leq\epsilon\}\Big\}\leq 1-q^2.
\end{equation*}
Thus
\begin{equation*}
  \begin{split}
  P\{d_\mathcal{V}(t)\leq \epsilon, t\geq T_0\}=&P\Big\{\bigcap\limits_{t=T_0}^\infty\{d_\mathcal{V}(t)\leq\epsilon\}\Big\}=\lim\limits_{m\rightarrow\infty}P\Big\{\bigcap\limits_{t=T_0}^m\{d_\mathcal{V}(t)\leq\epsilon\Big\}\\
  =&\lim\limits_{m\rightarrow\infty}\prod\limits_{t=T_0}^mP\Big\{d_\mathcal{V}(t)\leq \epsilon\Big|\bigcap\limits_{l<t}\{d_\mathcal{V}(l)\leq\epsilon\}\Big\}\\
  \leq &\lim\limits_{m\rightarrow\infty}(1-q^2)^m=0.
  \end{split}
\end{equation*}
This completes the proof.
\end{proof}

\section{Numerical Simulations}\label{Simulations}
In this section, we present the outcomes of simulation experiments to verify our main theoretical results in this paper. First, we present a fragmentation of noise-free HK model. We take $n= 20, \epsilon=0.21$, and the initial opinion values are uniformly distributed on $[0,1]$. Fig. \ref{noise0fig} shows that the disagreement forms. We then generate random noise sources independently from a uniform distribution on $[-\delta,\delta] (\delta>0)$ and we randomly select half of agents to be intervened by noise. By Theorem \ref{Suff_thm}, when $\delta\leq 0.5\epsilon$, the system should achieve a quasi-consensus in finite time. We let $\delta=0.1\epsilon$, and we see in Fig. \ref{noise1fig} that the disagreement vanishes and the opinions become synchronized. Furthermore, we consider only one single agent to be controlled, while the other conditions remain unchanged. In Fig. \ref{per1noise1fig} we see that the system indeed attains a quasi-consensus in finite time.

\begin{figure}[htp]
  \centering
  \includegraphics[width=5in]{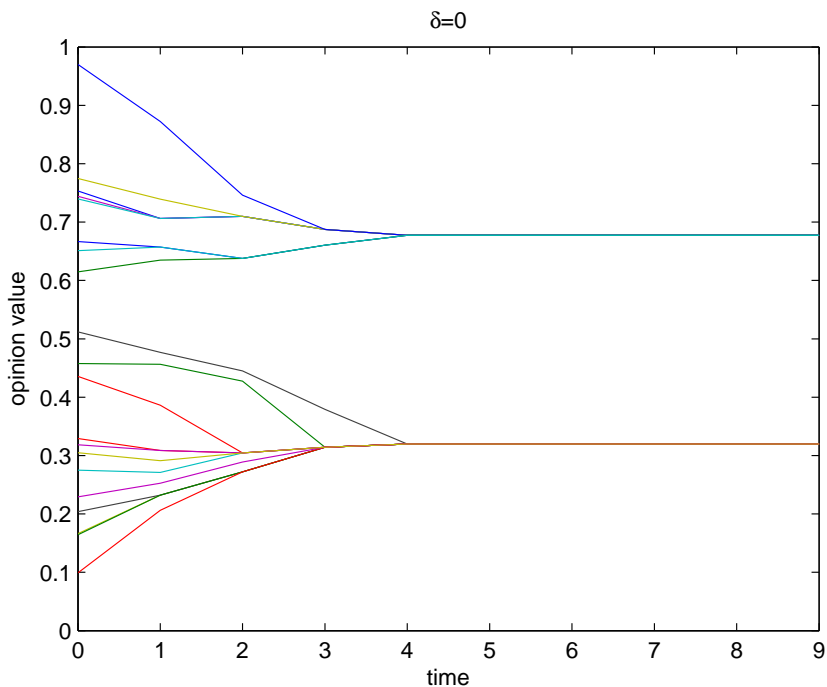}\\
  \caption{The evolution of opinions in a system with 20 agents without noise control. The initial opinion value $x(0)$ is generated by a uniform distribution on $[0,1]$, the confidence threshold $\epsilon=0.21$.  }\label{noise0fig}
\end{figure}
\begin{figure}[htp]
  \centering
  \includegraphics[width=5in]{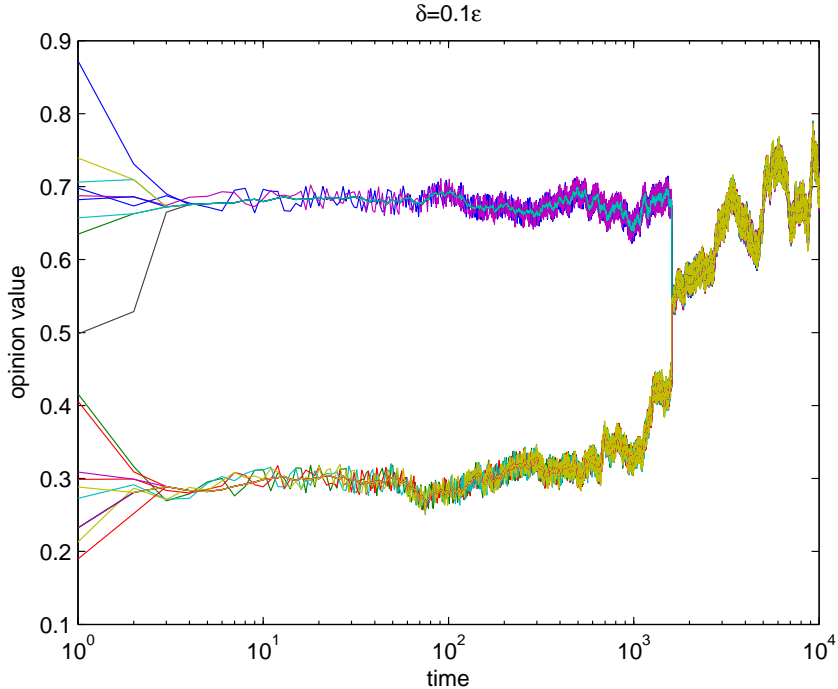}\\
  \caption{The evolution of opinions in a system with 20 agents and with a partial noise control. Here, the half of agents are randomly selected and subjected to the control, and the initial conditions are the same as those in Fig. \ref{noise0fig}; the noise strength $\delta=0.1\epsilon$.}\label{noise1fig}
\end{figure}
\begin{figure}[htp]
  \centering
  \includegraphics[width=5in]{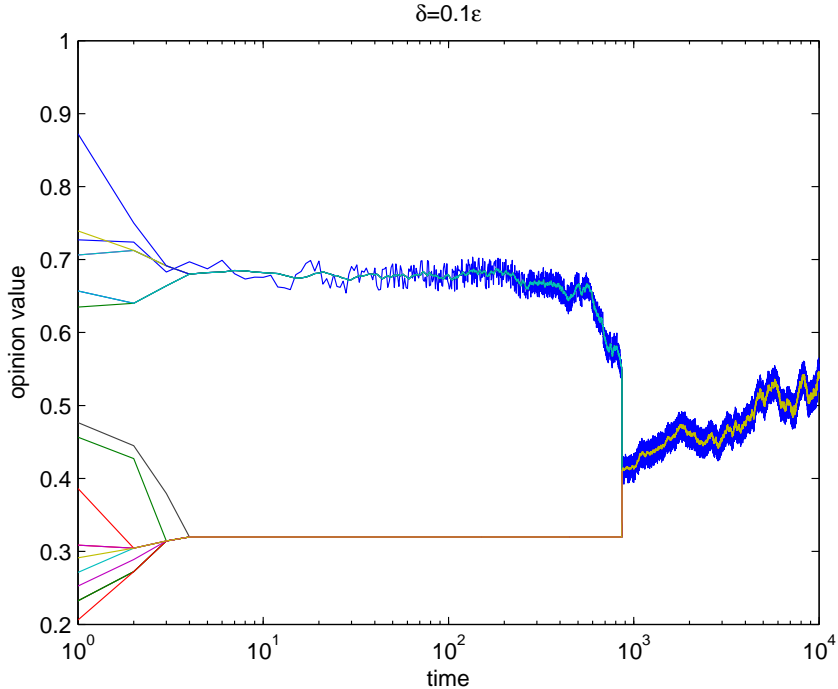}\\
  \caption{The evolution of opinions in a system with 20 agents and with a single-agent noise control. Only one agent is controlled, and the initial conditions are the same as those in Fig. \ref{noise0fig}, while the noise strength $\delta=0.1\epsilon$.  }\label{per1noise1fig}
\end{figure}
%

\section{Discussion  and Conclusions}\label{Conclusions}
Recently, HK opinion dynamics models with bounded confidence \cite{Hegselmann2002,Lorenz2007} and its variants \cite{Yyang2014,Fuetal2015,Wedin2015,Chazelle2017} have attracted considerable attention across fields. Nevertheless, feasible control strategies for HK systems are still lacking.

In this paper, we established a rigorous theoretical analysis for the noise-based opinion control strategy in large social networks where only a fraction of agents are noise-affected. The local rule based HK dynamics was taken as the underlying model of complex social systems. It was rigorously proved that, given any initial opinion configurations and irrespective of the amount of noise-affected agents, the noisy HK model will almost surely attain a quasi-consensus.

Indeed, the constructive role of noise has been revealed previously in a variety of physical, biological, and social systems (e.g. \cite{Su2016,Hadzibeganovic2015,Semenova2016}). However, its potential for the design of control strategies in dynamic social systems remained largely unexplored. While we have demonstrated that noise itself can serve as an efficient mechanism for control of opinion dynamics, its combined effects with other mechanisms are still unknown. For instance, recent studies of social dynamics \cite{Lichtenegger2016,HadStauffHan2018,Yamauchi2011} have evidenced nontrivial interactive effects of various combined mechanisms, including those with noise \cite{Lichtenegger2016,Hadzibeganovic2015}, that are not present when these mechanisms are considered in isolation. Future studies should therefore explore the effects of noise-related control strategies in social dynamics when they are employed in combination with other potent mechanisms.

For example, while previous theoretical analyses showed that the bounded confidence structure of HK dynamics can be synchronized by noise, it is still
unknown whether the noise-induced synchronization can occur with a topology-based mechanism. The theory of robust consensus in noisy multi-agent systems suggests that the topology-based mechanism does not have the property of noise-based synchronization. It is therefore necessary to examine the consequences of combining the topology-based mechanism with the bounded confidence structure.

In studies of noise-induced phenomena, it is typically assumed that noise source is a Gaussian distributed variable \cite{Wio2005}. However, it has been shown that noise-induced transitions can be shifted significantly when using non-Gaussian noise sources \cite{WioToral2004} characterized by nonextensive statistical properties \cite{Gellmann2004} that are often found in various biological and social processes \cite{Cannas2009,Kononovicius2014}. It thus remains a challenge for future research to also investigate noise-based mechanisms for the control of opinion dynamics when the noise source departs from the classical Gaussian behavior. Since non-Gaussian noise can significantly alter system's response by generating e.g. the reentrance effect (i.e. a shift from a disordered to an ordered state, and then back again to a disordered state)\cite{Wio2005,WioToral2004}, its applications could especially be interesting in systems in which only a temporary consensus is desired, after which the system can go back to its initial non-consensus state.

In sum, our present study demonstrated a feasibility of noise-induced control strategy in HK dynamic systems that can alter the system behavior after being applied to only a fraction of individuals and without relying on the complete knowledge of system's states. Our results thus represent the first step towards a more general theory of noise-induced control of social dynamics, and we hope they will inspire much further research.

\vspace{2.5ex}

\appendix

\begin{lem}\label{consenconnected}
Suppose the graph of the system (\ref{basicHKmodel})-(\ref{neigh}) remains connected under the protocol (\ref{noiseproto00}), then the system will
reach a quasi-consensus before a constant moment $\bar{L}_0=2n\frac{d_\mathcal{V}(0)-\epsilon}{\underline{d}}$ where $\underline{d}=\min\{a,d_0\}, 0<d_0<\frac{\epsilon}{2n^2+2n+1}$.
\end{lem}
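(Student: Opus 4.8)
The plan is to show that, under the protocol (\ref{noiseproto00}), the spread $d_\mathcal{V}(t)$ is nonincreasing and, as long as the graph stays connected and $d_\mathcal{V}(t)>\epsilon$, contracts at a definite rate, so that it must fall to $\epsilon$ within $\bar{L}_0$ steps. First I would record the monotonicity already noted in the main text: writing $m(t)=\min_j x_j(t)$ and $M(t)=\max_j x_j(t)$, Lemma \ref{monosmlem} together with the sign choice in (\ref{noiseproto00}) gives that $m(t)$ is nondecreasing and $M(t)$ is nonincreasing, hence $d_\mathcal{V}(t)=M(t)-m(t)$ is nonincreasing; since $0\le m(t)$ and $M(t)\le 1$ are monotone, the clamping in (\ref{basicHKmodel}) never interferes.

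The core is a one-step averaging estimate. Fix $t$ with the graph connected and $d_\mathcal{V}(t)>\epsilon$, and let $v_1=m(t)<v_2<\cdots<v_L=M(t)$ be the distinct opinion values. Connectivity forces every consecutive gap $v_{i+1}-v_i\le\epsilon$ (otherwise the part at or below $v_i$ has no edge to the part at or above $v_{i+1}$). Consequently every agent at $v_1$ has the $v_2$-agents as neighbours, and for an arbitrary $j$ one gets $\widetilde{x}_j(t)\ge v_1+(v_2-v_1)/n$, because all neighbours are $\ge v_1$ while either $j$ itself or one of its neighbours is $\ge v_2$. Lower-half agents in (\ref{noiseproto00}) receive nonnegative noise and only rise further, whereas an upper-half agent ends at $\ge\widetilde{x}_j(t)-\delta>m(t)+d_\mathcal{V}(t)/2-\delta$. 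Combining, $m(t+1)\ge m(t)+\min\{(v_2-v_1)/n,\,d_\mathcal{V}(t)/2-\delta\}$, and symmetrically $M(t+1)\le M(t)-\min\{(v_L-v_{L-1})/n,\,d_\mathcal{V}(t)/2-\delta\}$. In particular, if the term $d_\mathcal{V}(t)/2-\delta$ is active on both ends, then $d_\mathcal{V}(t+1)\le 2\delta\le\epsilon$ and quasi-consensus is reached in one step; otherwise the contraction is governed by the extreme gaps $v_2-v_1$ and $v_L-v_{L-1}$.

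The remaining, and main, difficulty is the case of very small extreme gaps, where $(v_2-v_1)/n$ (or its top analogue) can be far below $\underline{d}/n$ and the one-step contraction degrades. To handle it I would introduce the bottom blob $B(t)$, the maximal set of lowest agents linked by consecutive gaps $\le d_0$, and its top counterpart. If every consecutive gap is $\le d_0$, then $d_\mathcal{V}(t)\le(n-1)d_0<\epsilon$ by the choice $d_0<\epsilon/(2n^2+2n+1)$, so quasi-consensus already holds. Otherwise there is a gap exceeding $d_0$ above $B(t)$ (and below the top blob), and one shows that within a blob the agents contract toward one another while the blob as a whole is pulled inward, so that after a bounded number of steps either the blob coalesces, reducing the number of distinct clusters (which can happen at most $n$ times), or the extreme value has advanced by at least $\underline{d}/n$. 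The threshold $\epsilon/(2n^2+2n+1)$ is dimensioned precisely so that these small-gap and merging phases, interleaved with the definite advances of $m$ and $M$, stay within the slack of the claimed bound $\bar{L}_0=2n(d_\mathcal{V}(0)-\epsilon)/\underline{d}$. Summing up, each step before quasi-consensus either closes the gap by at least $\underline{d}/n$ or belongs to one of at most $n$ bounded merging phases, and the total step count therefore cannot exceed $\bar{L}_0$.

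I expect the blob-merging bookkeeping, namely proving a uniform lower bound on the per-phase contraction and controlling the number of phases with the specific constant $d_0$, to be the technical crux; the monotonicity of $m,M$ and the single-step averaging estimate are routine by comparison.
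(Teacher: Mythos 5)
Your first two steps are sound and match the paper's setup: the monotonicity of $m(t)=\min_jx_j(t)$ and $M(t)=\max_jx_j(t)$ under the protocol (\ref{noiseproto00}) via Lemma \ref{monosmlem}, the observation that connectivity bounds every consecutive gap by $\epsilon$, and the one-step estimate $m(t+1)\geq m(t)+\min\{(v_2-v_1)/n,\,d_\mathcal{V}(t)/2-\delta\}$ with its telescoping consequence $d_\mathcal{V}(t+1)\leq 2\delta\leq\epsilon$ when the middle term binds at both ends. But the case you yourself flag as the main difficulty --- an extreme gap $v_2-v_1$ far below $d_0$ --- is exactly where your argument stops being a proof. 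The claims that ``within a blob the agents contract toward one another,'' that ``after a bounded number of steps either the blob coalesces \ldots or the extreme value has advanced by at least $\underline{d}/n$,'' and that $d_0<\epsilon/(2n^2+2n+1)$ is ``dimensioned precisely'' to make the bookkeeping work are all asserted, not shown, and the first and second are genuinely problematic: controlled agents inside the bottom blob receive noise of magnitude up to $\delta$, which may be as large as $\epsilon/2$ and hence vastly larger than $d_0$, so agents within a blob need not contract and a blob can split. Consequently the number of distinct clusters is not monotone and your ``at most $n$ merging phases'' count has no basis; without a proven per-phase advance and a proven bound on the number of slow phases, the claimed bound $\bar{L}_0=2n(d_\mathcal{V}(0)-\epsilon)/\underline{d}$ does not follow.

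The paper closes this gap by a different and quite concrete mechanism, which you would need to reproduce: work with the minimal agent $m(t)$ and $K(t)=\max\{j:\,x_j(t)-x_{m(t)}(t)\leq\epsilon\}$. If the minimum advances by less than $d_0$ in one step, the averaging identity (\ref{x1tvalue}) forces \emph{all} bottom agents to be compressed, $x_i(t)-x_{m(t)}(t)\leq K(t)d_0$ for $i\leq K(t)$ (this is (\ref{distneigh})); connectivity then places agent $K(t)+1$ at distance in $(\epsilon-K(t)d_0,\,\epsilon]$ above agent $K(t)$, so averaging over the neighborhood of agent $K(t)$ yields $x_{K(t)}(t+1)-x_{K(t)}(t)>(K(t)+1)d_0$ --- this inequality, requiring $\epsilon>(2K^2+2K+1)d_0$, is the one and only place the constant $\epsilon/(2n^2+2n+1)$ is actually used. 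Hence agent $K(t)$ separates from the new minimum by more than $K(t)d_0$, and if it is still a neighbor of the minimal agent, the contrapositive of (\ref{distneigh}) forces an advance of at least $d_0$ at the following step; if not, connectivity lets one repeat the argument at most $n$ times. So a ``slow'' step forces a $d_0$-advance of the minimum within the next $n$ steps, which is what produces the factor $2n$ in $\bar{L}_0$ (together with the advances of at least $a$, respectively $a/n$, when $m(t)$ or one of its persistent neighbors lies in $\mathcal{S}$, whence $\underline{d}=\min\{a,d_0\}$), after which Lemma \ref{robconspeci} maintains the quasi-consensus. To repair your proposal you would have to replace the blob-coalescence heuristic with a lemma of exactly this ``compression forces ejection forces advance'' type.
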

\begin{proof}
If $d_\mathcal{V}(0)\leq \epsilon$, the conclusion holds by Lemma \ref{robconspeci}. Now we only consider the case of $d_\mathcal{V}(0)>\epsilon$. Denote $m(t)\in\mathcal{V}$ as the agent with the smallest opinion value at moment $t\geq 0$ and let $K(t)=\max\{1\leq j\leq n: x_j(t)-x_{m(t)}(t)\leq \epsilon\}$, then
\begin{equation}\label{x1tvalue}
  x_{m(t)}(t+1)=\frac{1}{K(t)}\sum\limits_{i=1}^{K(t)}x_i(t)+I_{m(t)\in\mathcal{S}}\xi_{m(t)}(t+1).
\end{equation}
If there exist $t_1,\ldots,t_n$ such that $m(t_i)\in\mathcal{S}, i=1,\ldots,n$, by (\ref{x1tvalue}) and Lemma \ref{monosmlem}, we know that $x_{m(t_n)}(t_n)$ increase at least $a$ from $x_{m(0)}(0)$ under the protocol (\ref{noiseproto00}). If an agent $i\in\mathcal{S}$ is a neighbor of agent $m(t)$ at moment $t$ and also at $t+1$, by (\ref{x1tvalue}) and Lemma \ref{monosmlem}, we know that at $t+1$, $x_{m(t+1)}(t+1)$ will increase at least $a/n$ under protocol (\ref{noiseproto00}). Now we consider the case when agents $m(t)$ and all its neighbors at $t$ are not in $\mathcal{S}$. Let $0<d_0<\frac{\epsilon}{2n^2+2n+1}$ be a constant. We will show that if there is a moment $t$ such that $x_{m(t+1)}(t+1)-x_{m(t)}t)\leq d_0$, then $x_{m(t+2)}(t+2)-x_{m(t+1)}(t+1)\geq d_0$. For convenience of denotation, we also number the agents at each moment $t$ with $x_1(t)\leq x_2(t)\leq \ldots\leq x_n(t)$. By (\ref{x1tvalue}), it has
\begin{equation*}
  x_{m(t+1)}(t+1)-x_{m(t)}(t)=\frac{1}{K(t)}\sum\limits_{i=1}^{K(t)}(x_i(t)-x_{m(t)}(t)),
\end{equation*}
then
\begin{equation}\label{distneigh}
  x_i(t)-x_{m(t)}(t)\leq K(t)d_0,\quad 1\leq i\leq K(t).
\end{equation}
Since the graph is connected, by the definition of $K(t)$ and (\ref{distneigh}), we know that $\epsilon-K(t)d_0<x_{K(t)+1}(t)-x_{K(t)}(t)\leq \epsilon$ where $K(t)+1$ is the agent with smallest opinion value larger than $x_{K(t)}(t)$. Hence by (\ref{basicHKmodel}) and Lemma \ref{monosmlem},
\begin{equation*}
  \begin{split}
  x_{K(t)}(t+1)-x_{K(t)}(t)
  =&\frac{1}{|\mathcal{N}(K(t), x(t))|}\sum\limits_{j\in \mathcal{N}(K(t), x(t))}(x_j(t)-x_{K(t)}(t)\\
  \geq & \frac{1}{K(t)+1}\Big(\sum\limits_{j=1}^{K(t)-1}(x_j(t)-x_{K(t)}(t))+x_{K(t)+1}(t)-x_{K(t)}(t)\Big)\\
  >& \frac{1}{K(t)+1}(\epsilon-K(t)d_0-(K(t)-1)K(t)d_0)\\
  =&\frac{1}{K(t)+1}\epsilon-\frac{K^2(t)}{K(t)+1}d_0\\
  >&(K(t)+1)d_0.
  \end{split}
\end{equation*}
hence $x_{K(t)}(t+1)-x_{m(t+1)}(t+1)\geq x_{K(t)}(t+1)-x_{m(t)}(t)-d_0\geq x_{K(t)}(t+1)-x_{K(t)}(t)-d_0>K(t)d_0$.
This implies that at $t+1$ the opinion value of agent $K(t)$ is more than $K(t)d_0$ apart from agent $m(t+1)$. If agent $K(t)$ is still a neighbor of agent $m(t+1)$ at $t+1$, by (\ref{distneigh}), we have $x_{m(t+2)}(t+2)-x_{m(t+1)}(t+1)\geq d_0$. Otherwise, repeating the above procedure no more than $n$ times. Since the graph is always connected, there exists a moment $t\leq t_h\leq t+n$ such that $K(t_h)$ is the neighbor of $m(t_h+1)$, implying $x_{m(t_h+1)}(t_h+1)-x_{m(t)}(t)\geq x_{m(t_h+1)}(t_h+1)-x_{m(t_h)}(t_h)\geq d_0$. Hence, after each moment when the opinion value of agent $m(t)$ increases less than $d_0$, it will increase at least by $d_0$ during the next $n$ times. Let $\underline{d}=\min\{a,d_0\}, \bar{L}_0=2n\frac{d_\mathcal{V}(0)-\epsilon}{\underline{d}}$, we know with the above analysis that $d_\mathcal{V}(\bar{L}_0)\leq \epsilon$, then by Lemma \ref{robconspeci}, the system will reach a quasi-consensus.
\end{proof}

\end{document}